\newcommand{\be}{\begin{equation}}
\newcommand{\ee}{\end{equation}}
\newcommand{\bear}{\begin{eqnarray}}
\newcommand{\eear}{\end{eqnarray}}
\newcommand{\bears}{\begin{eqnarray*}}
\newcommand{\eears}{\end{eqnarray*}}
\newcommand{\bi}{\begin{itemize}}
\newcommand{\ei}{\end{itemize}}
\newcommand{\ben}{\begin{enumerate}}
\newcommand{\een}{\end{enumerate}}
\newtheorem{theorem}{\textbf{Theorem}}
\newtheorem{prop}{\textbf{Proposition}}
\newtheorem{definition}{\textbf{Definition}}
\newtheorem{corollary}{\textbf{Corollary}}
\renewcommand{\vec}[1]{\mbox{\boldmath$#1$}}
\begin{document}

\title{Exact-Repair Regenerating Codes Via Layered Erasure Correction and Block Designs}

\author{Chao Tian,~\IEEEmembership{Senior Member,~IEEE}, Vaneet Aggarwal,~\IEEEmembership{Member,~IEEE}, \\and Vinay A. Vaishampayan,~\IEEEmembership{Fellow,~IEEE}
}
\maketitle




\begin{abstract}

A new class of exact-repair regenerating codes is constructed by combining two layers of erasure correction codes together with combinatorial block designs, {\em e.g.}, Steiner systems, balanced incomplete block designs and $t$-designs. The proposed codes have the \lq\lq{}uncoded repair\rq\rq{} property where the nodes participating in the repair simply transfer part of the stored data directly, without performing any computation. The layered error correction structure makes the decoding process rather straightforward, and in general the complexity is low. We show that this construction is able to achieve performance better than time-sharing between the minimum storage regenerating codes and the minimum repair-bandwidth regenerating codes. 
\end{abstract}

\section{Introduction}

Distributed data storage systems can encode and disperse information (a message) to multiple storage nodes (or disks) in the network such that a user can retrieve it by accessing only a subset of them. This kind of systems is able to provide superior reliability performance in the event of disk corruption or network congestion. In order to reduce the amount of storage redundancy required to guarantee such reliability performance, erasure correction codes can be used instead of simple replication of the data.

When the data is coded by an erasure code, data repair ({\em e.g.}, due to node failure) becomes more involved, because the information stored at a given node may not be directly available from any one the remaining storage nodes, but it can be nevertheless reconstructed since it is a function of the information stored at these nodes. One key issue that affects the system performance is the total amount of information that the remaining nodes need to transmit to the new node. Consider a storage system which has a total of $n$ storage nodes, and the data can be reconstructed by accessing any $k$ of them. A failed node is repaired by requesting any $d$ of the remaining nodes to provide information, and then using the received information to construct a new data storage node. A naive approach is to let these helper nodes transmit sufficient data such that the underlying complete data can be reconstructed, and then the information that needs to be stored at the new node can be subsequently generated. This approach is however rather wasteful, since the data stored at the new node is only a fraction of the complete data. 

Dimakis {\em et al.} in \cite{Dimakis:10} provided a theoretical framework, namely regenerating codes, to investigate the tradeoff between the amount of storage at each node ({\em i.e.,} data storage) and the amount of data transfer for repair ({\em i.e.,} repair bandwidth). It was shown that for the case when the regenerated information at the new node only needs to fulfill the role of the failed node functionally ({\em i.e.}, functional-repair), but not to replicate exactly the original information content at the failed node ({\em i.e.}, exact-repair), the problem can be converted to an equivalent network multicast problem, and thus the celebrated network coding result \cite{Yeung:00} can be applied.  By way of this equivalence, the optimal tradeoff between the storage and repair bandwidth was completely characterized in \cite{Dimakis:10} for functional-repair regenerating codes. The two important extreme cases of the optimal tradeoff, where the data storage is minimized and the repair bandwidth is minimized, are referred to minimum storage regenerating (MSR) codes and minimum bandwidth regenerating (MBR) codes, respectively.  The problem of functional-repair regenerating codes is well understood and constructions of such codes are available \cite{Dimakis:10,Wu:10,Dimakis:11}.

The functional-repair framework implies that the repair rule and the decoding rule in the system may evolve over time, which incurs additional system overhead. Furthermore, functional repair does not guarantee the data to be stored in systematic form, which is an important practical requirement to consider. In contrast, exact-repair regenerating codes do not have such disadvantages. The problem of exact-repair regenerating codes was investigated in \cite{RashmiShah:11,RashmiShah:12:1,RashmiShah:12:2,Cadambe:11,PapailiopoulosDimakisCadambe:11,Tamo:11,CadambeHuang:11}, all of which address either the MBR case or the MSR case. Particularly, the optimal code constructions in \cite{RashmiShah:12:1} and \cite{RashmiShah:11} show that the more stringent exact-repair requirement does not incur any penalty for the MBR case; the constructions in \cite{RashmiShah:12:2,RashmiShah:11,Cadambe:11} show that this is also true for the MSR case. These results may lead to the impression that enforcing exact-repair never incurs any penalty compared to functional repair. However, the result in \cite{RashmiShah:12:1} shows that this is not the case, and in fact a large portion of the optimal tradeoffs achievable by functional-repair codes can not be strictly achieved by exact-repair codes\footnote{One may question whether exact-repair codes can asymptotically approach these tradeoffs, however in \cite{Tian:12} it is shown that there indeed exists a non-vanishing gap between the optimal functional-repair tradeoff and the exact-repair tradeoff.}. 

Codes achieving tradeoff other than the MBR or the MSR points may be more suitable for a system employing exact-repair regenerating codes. From a practical point of view, codes achieving other tradeoff points may have lower complexity than using the time-sharing approach, because the MSR point requires interference alignment, and it is known to be impossible for linear codes to achieve the MSR point for some parameters without symbol extension \cite{RashmiShah:12:2}. As such, it is important to find such codes with competitive performance and low complexity. 
However, it is in fact unknown whether there even exist codes that can achieve a storage-bandwidth tradeoff better than simply time-sharing between an MBR code and an MSR code. 
In this work, we provide a linear code construction based on the combination of two layers of erasure correction codes and combinatorial block designs, which is indeed able to achieve tradeoff points better than the time-sharing between an MBR code and an MSR code. The two erasure correction codes are not independent, which must be jointly designed to satisfy certain full rank conditions to guarantee successful decoding. In this work we mainly focus on the case when $d=n-1$, {\em i.e.}, when the repair requires the access to all the other storage nodes, however it can indeed to generalized to the case $d<n-1$.

The conceptually straightforward code construction we propose has the property that the nodes participating in the repair do not need to perform any computation, but can simply transmit certain stored information for the new node to synthesis and recover the lost information. The uncoded repair property is appealing in practice, since it reduces and almost completely eliminates the computation burden at the helper nodes. This property also holds in the constructions proposed in \cite{RashmiShah:12:1} and \cite{Papailiopoulos:12INFOCOM}. In fact our construction was partially inspired by and may be viewed as a generalization of these codes. 
Another closely related work is \cite{Rouayheb:10}, where repetition and erasure correction codes are combined to construct codes for the MBR point, and one of constructions indeed relies on Steiner systems. The model in \cite{Rouayheb:10} is however different from ours (and that in \cite{Dimakis:10}), where the repair procedure only needs to guarantee the existence of one particular $d$-helper-node combination (fix-access repair), instead of the more stringent requirement that the repair information can come from any $d$-helper-node combination (random-access repair). Though both models have their merits, we focus on the more stringent and thus more robust random-access repair model in this work. 

The rest of the paper is organized as follows. In Section \ref{sec:definition}, a formal definition is given for the coding problem and several relevant existing results are reviewed. Section \ref{sec:example}  provides an example to illustrate the structure of the proposed construction. Section \ref{sec:dequalsnminus1} provides the general code construction in three progressive steps, and in Section \ref{sec:performance} the performance is analyzed. Finally \ref{sec:conclusion} concludes the paper.

\section{Problem Definition and Preliminaries}
\label{sec:definition}

In this section, we first provide a formal definition of exact-repair regenerating codes. 
Some existing results on regenerating codes, basics on maximum separable regenerating codes and block designs are also briefly reviewed. 

\subsection{Definition of Exact-Repair Regenerating Codes}

An $(n,k,d)$ exact-repair regenerating code is a storage system with a total of $n$ storage nodes (disks)\footnote{From here on, we shall use \lq\lq{}node\rq\rq{} and \lq\lq{}disk\rq\rq{} interchangably.}, where any $k$ of them can be used to reconstruct the complete data, and furthermore to repair a lost disk, the new disk may access data from any $d$ of the remaining $n-1$ disks. Let the total amount of raw data stored be $M$ units and let each storage site stores $\alpha$ units of data, which implies that the redundancy of the system is $n\alpha-M$. To repair a disk failure (regenerate a new disk), each contributing disk transmits $\beta$ units of data to the new node, which results in a total of $d\beta$ units of data transfer for repair. It is clear that the quantities $\alpha$ and $\beta$ scale linearly with $B$, because a code can simply be concatenated. For this reason we shall normalize them the other two quantities using $\beta$
\begin{align}
\bar{\alpha}\triangleq \frac{\alpha}{\beta},\quad \bar{M}\triangleq\frac{M}{\beta}, 
\end{align}
and use them as the measure of performance from here on. 

Formally, the problem can be defined as follows. The notation $I_n$ is used to denote the set $\{1,2,\ldots,n\}$, and without loss of generality we assume $k\leq d$.
\begin{definition}
An $(n,k,d,N,N_d,K)$ exact-repair regenerating code consists of a total of $n$ encoding function $f^E_i(\cdot)$, a total of ${n \choose k}$ decoding functions $f^D_{A}(\cdot)$, a total of $nd{n-1 \choose d}$ repair encoding functions $F^{E}_{i,A,j}(\cdot)$,  and a total of $n{n-1 \choose d}$ repair decoding functions $F^{D}_{i,A}(\cdot)$, where
\begin{align}
f^E_i:I_N\rightarrow I_{N_d},\quad i\in I_n,
\end{align}
which map the message $m\in I_N$ to $n$ pieces of coded information, 
\begin{align}
f^D_{A}:I^k_{N_d}\rightarrow I_N,\quad A\subset {I}_n\quad \mbox{and}\quad |A|=k
\end{align}
which maps the $k$ pieces of coded information in a set $A$ to the original message, 
\begin{align}
F^{E}_{i,A,j}:I_{N_d}\rightarrow I_{K},\quad j\in I_n,\quad A\subseteq {I}_n\setminus\{j\}\quad \mbox{and}\quad |{A}|=d,\quad \, i\in {A},
\end{align}
which maps a piece of coded information to an index that will be made available to the new node, and
\begin{align}
F^{D}_{j,{A}}:{I}^d_{K} \rightarrow {I}_{N_d},\quad \quad j\in{I}_n,\quad{A}\subseteq {I}_n\setminus\{j\}\quad \mbox{and}\quad |{A}|=d,
\end{align}
which maps $d$ of such indices from the helper nodes to reconstruct the information stored at the lost node.  The functions must satisfy the data reconstruction conditions
\begin{align}
f_{{A}}^D\left(\prod_{i\in{A}}f^E_i(m)\right)=m,\quad m\in{I}_N,\quad{A}\subset {I}_n\quad \mbox{and}\quad |{A}|=k,
\end{align}
and the repair conditions
\begin{align}
F^D_{j,{A}}\left(\prod_{i\in{A}}F^E_{i,{A},j}\left(f^E_i(m)\right)\right)=f^E_j(m),\quad m\in{I}_N,\quad j\in{I}_n,\quad {A}\subseteq{I}_n\setminus\{j\}\quad \mbox{and}\quad |{A}|=d.
\end{align}
\end{definition}

\begin{definition}
A normalized pair $(\bar{\alpha},\bar{M})$ is said to be achievable for $(n,k,d)$ regenerating if for any $\epsilon>0$ there exists an $(n,k,d,N,N_d,K)$ code such that
\begin{align}
\bar{\alpha}+\epsilon\geq \frac{\log N_d}{\log K}
\end{align}
and
\begin{align}
\bar{M}-\epsilon\leq \frac{\log N}{\log K}.
\end{align}
\end{definition}

The quantity $\epsilon$ in the definition above is introduced to include the case when the storage-bandwidth tradeoff may be approached asymptotically, {\em e.g.}, the case discussed in \cite{Cadambe:11}. 

It is sometimes insightful to consider the case when $n$ is large while $k=n-\tau_1$ and $d=n-\tau_2$ where $\tau_1$ and $\tau_2$ are fixed positive constant integers such that $\tau_1\geq \tau_2$.  For this purpose, the following two quantities become relevant.
\begin{definition}
An $E$-pair $(\mathsf{E}^{(n)}_r,\mathsf{E}^{(n)}_d)$ where
\begin{align} 
\mathsf{E}^{(n)}_r\triangleq\frac{\log(\bar{M}-n\bar{\alpha})}{\log n},\quad \mathsf{E}^{(n)}_d\triangleq\frac{\log\bar{M}}{\log n}
\end{align}
is $(n,\tau_1,\tau_2)$-achievable if $(\bar{\alpha}^{(n)},\bar{M}^{(n)})$ is achievable for $(n,n-\tau_1,n-\tau_2)$ regenerating. The collection of all $(n,\tau_1,\tau_2)$-achievable pairs is denoted as $\mathcal{E}^{(n)}$. The achievable redundancy-data-rate exponent region $\mathcal{E}$ is  the closure of  $\limsup_{n\rightarrow \infty}\mathcal{E}^{(n)}$.
%
\end{definition}

In Section \ref{sec:performance}, we shall show that the proposed codes are able to achieve the entire exponent region $\mathcal{E}$, while time-sharing between the MSR point and the MBR point can not.

\subsection{Cut-Set Outer Bound, MBR Point and MSR Point}
\begin{figure}[tcb]
  \centering
  \includegraphics[width=10cm]{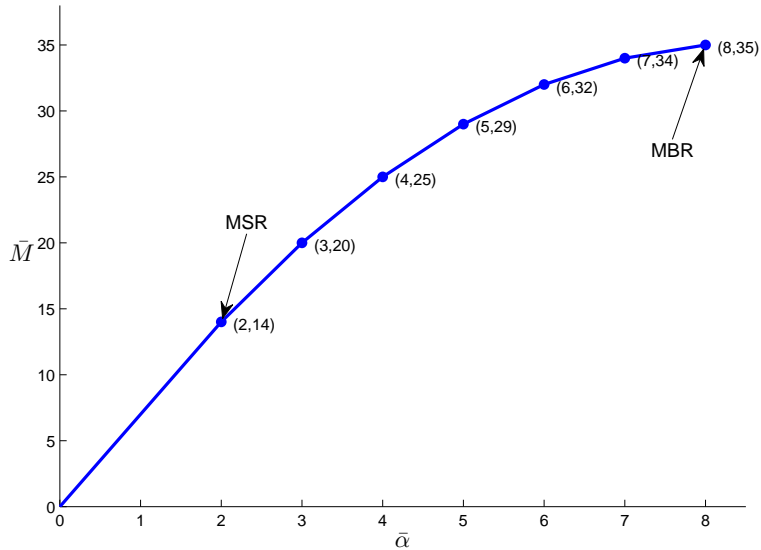}
  \caption{Cutset bound for $(n,k,d)=(9,7,8)$.\label{fig:cutset}}
\end{figure}

As mentioned earlier, the functional-repair regenerating coding problem can be converted to a multicast problem, and through this connection, a precise characterization of the optimal storage-bandwidth tradeoff was obtained in \cite{Dimakis:10} using cut-set analysis. Since exact-repair is a more stringent requirement than functional-repair, this characterization provides an outer bound for exact-repair regenerating codes.
\begin{theorem}[\cite{Dimakis:10}]
\label{theorem:cutset}
Any exact-repair regenerating codes must satisfy the following condition
\begin{align}
\sum_{i=0}^{k-1}\min(\bar{\alpha},(d-i))\geq \bar{M}.
\end{align}
\end{theorem}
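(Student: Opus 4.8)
The plan is to prove this directly by an entropy (cut-set) argument, which for the exact-repair setting is cleaner than translating to the information-flow-graph / network-coding converse of \cite{Dimakis:10}. Fix an arbitrary $(n,k,d,N,N_d,K)$ code, let the message $m$ be uniform on $I_N$, and for each node $i$ write $\mathcal{W}_i\triangleq f^E_i(m)$ for its stored content, so that $H(\mathcal{W}_i)\le \log N_d$. For a failed node $j$ and an admissible helper set $A$, write $\mathcal{S}^A_{i\to j}\triangleq F^E_{i,A,j}(\mathcal{W}_i)$ for the symbol sent by helper $i$, so that $H(\mathcal{S}^A_{i\to j})\le \log K$ and $\mathcal{S}^A_{i\to j}$ is a deterministic function of $\mathcal{W}_i$. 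The repair condition says $\mathcal{W}_j$ is a deterministic function of $\{\mathcal{S}^A_{i\to j}:i\in A\}$, and the reconstruction condition says $m$ is a deterministic function of $\{\mathcal{W}_i:i\in A'\}$ whenever $|A'|=k$.

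Next I would lower-bound, then upper-bound, the joint entropy of the first $k$ nodes. Since $m$ is recoverable from nodes $1,\dots,k$, we get $H(\mathcal{W}_1,\dots,\mathcal{W}_k)\ge H(m)=\log N$. For the upper bound, apply the chain rule $H(\mathcal{W}_1,\dots,\mathcal{W}_k)=\sum_{i=1}^{k}H(\mathcal{W}_i\mid \mathcal{W}_1,\dots,\mathcal{W}_{i-1})$ and bound the $i$-th term two ways. First, trivially $H(\mathcal{W}_i\mid\mathcal{W}_1,\dots,\mathcal{W}_{i-1})\le H(\mathcal{W}_i)\le\log N_d$. Second, pick an admissible helper set $A_i\subseteq I_n\setminus\{i\}$ with $|A_i|=d$ and $\{1,\dots,i-1\}\subseteq A_i$ — this is possible because $i-1\le k-1\le d-1$ and $n-1\ge d$. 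Since $\mathcal{W}_i$ is a function of $\{\mathcal{S}^{A_i}_{\ell\to i}:\ell\in A_i\}$, and for each $\ell\in\{1,\dots,i-1\}$ the symbol $\mathcal{S}^{A_i}_{\ell\to i}$ is already determined by the conditioning variables $\mathcal{W}_1,\dots,\mathcal{W}_{i-1}$, only the $d-(i-1)$ remaining helpers contribute, so $H(\mathcal{W}_i\mid\mathcal{W}_1,\dots,\mathcal{W}_{i-1})\le (d-i+1)\log K$. Combining, the $i$-th term is at most $\min(\log N_d,(d-i+1)\log K)$; summing over $i=1,\dots,k$ and reindexing gives $\log N\le\sum_{i=0}^{k-1}\min(\log N_d,(d-i)\log K)$, i.e. after dividing by $\log K$, $\frac{\log N}{\log K}\le\sum_{i=0}^{k-1}\min(\frac{\log N_d}{\log K},d-i)$.

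Finally I would remove the $\epsilon$-slack in the definition of achievability: for an achievable $(\bar{\alpha},\bar{M})$ and any $\epsilon>0$ there is a code with $\bar{\alpha}+\epsilon\ge \log N_d/\log K$ and $\bar{M}-\epsilon\le \log N/\log K$; substituting into the inequality just derived and using monotonicity of $\min(\cdot,d-i)$ yields $\bar{M}\le\sum_{i=0}^{k-1}\min(\bar{\alpha},d-i)+(k+1)\epsilon$, and letting $\epsilon\downarrow 0$ gives the claim.

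The main point requiring care is the choice of the helper sets $A_i$ so that the chain of functional dependencies closes: each $A_i$ must be admissible (size exactly $d$, not containing $i$) while simultaneously containing all previously processed nodes $1,\dots,i-1$, and one must correctly identify which conditional entropies vanish. Once that is set up, the rest is routine manipulation of the chain rule, the "conditioning reduces entropy" inequality, and the $\epsilon$-definition; no inequality beyond these is needed.
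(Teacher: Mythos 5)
Your argument is correct, but it follows a genuinely different route from the paper. The paper does not prove Theorem~\ref{theorem:cutset} at all: it imports it from \cite{Dimakis:10}, where the bound is obtained by converting the \emph{functional-repair} problem into a network multicast problem on the information flow graph and applying the max-flow/min-cut characterization of network coding; the paper then observes that exact repair is a more stringent requirement, so the same outer bound applies. You instead give a self-contained entropy converse: lower-bound $H(\mathcal{W}_1,\dots,\mathcal{W}_k)$ by $\log N$ via reconstruction, upper-bound each chain-rule term by $\min\bigl(\log N_d,(d-i+1)\log K\bigr)$ by choosing an admissible helper set $A_i\supseteq\{1,\dots,i-1\}$ (feasible since $k\le d\le n-1$) and noting that the repair symbols from already-conditioned nodes carry no additional entropy, then clean up the $\epsilon$-slack in the achievability definition. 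This is sound, and it is worth noting what each approach buys: the cited flow-graph argument proves the stronger functional-repair statement (of which the exact-repair bound is a corollary) and also yields achievability of the tradeoff for functional repair, whereas your proof is more elementary, avoids the multicast machinery entirely, but leans crucially on exact repair --- the step ``$\mathcal{W}_i$ is a function of $\{\mathcal{S}^{A_i}_{\ell\to i}\}_{\ell\in A_i}$'' uses the fact that repair reproduces $f^E_i(m)$ verbatim, so the same chain-rule bookkeeping would not go through for functional repair without reintroducing the flow graph. Two cosmetic points: the normalization by $\log K$ tacitly assumes $K\ge 2$ (harmless, since the definition of $(\bar\alpha,\bar M)$ already presumes it), and your $(k+1)\epsilon$ slack estimate is fine since $\min(\bar\alpha+\epsilon,d-i)\le\min(\bar\alpha,d-i)+\epsilon$ for each of the $k$ terms.
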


One extreme case of this outer bound is when the storage is minimized, {\em i.e.}, the minimum storage regenerating (MSR) point, which is
\begin{align}
\bar{\alpha}=(d-k+1), \quad \bar{M}=k(d-k+1).
\end{align}
The other extreme case is when the repair bandwidth is minimized, {\em i.e.}, the minimum bandwidth regenerating (MBR) point, which is
\begin{align}
\bar{\alpha} =d,\quad\bar{M}=\frac{k(2d-k+1)}{2}.
\end{align}
Both of these two extreme points are achievable \cite{RashmiShah:11,RashmiShah:12:1,RashmiShah:12:2,Cadambe:11} also for the exact-repair case. The functional repair outer bound is however not tight in general, which implies that the exact-repair condition will indeed incur a penalty in many cases \cite{RashmiShah:12:1}\cite{Tian:12}. The cut-set outer bound and the two extreme points are illustrated in Fig. \ref{fig:cutset} for $(n,k,d)=(9,7,8)$; note that the bound is piece-wise linear. The segment between the MSR point and the origin $(0,0)$ is given by the trivial bound $k\bar{\alpha}\leq \bar{M}$, and it is essentially a degenerate regime because to achieve this segment of tradeoff, we can simply utilize an MSR code but let the helper nodes send more than necessary amount ({\em i.e.,} more than $\beta$ units) of data. 


\subsection{Maximum Distance Separable Code}

A linear code of length-$n$ and dimension $k$ is called an $(n,k)$ code. The Singleton bound (see {\em e.g.}, \cite{Wicker:book}) is a well known upper bound on the minimum distance for any $(n,k)$ code.

\begin{theorem}
The minimum distance $d_{\min}$ for an $(n,k)$ code is bounded by $d_{\min}\leq n-k+1$.
\end{theorem}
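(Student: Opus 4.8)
The plan is to use a projection (puncturing) argument on codewords. First I would fix an $(n,k)$ linear code $\mathcal{C}$ over some field, and consider the coordinate projection $\pi:\mathcal{C}\to\mathbb{F}^{k-1}$ that keeps only the first $k-1$ coordinates of each codeword and discards the rest. The domain $\mathcal{C}$ has $q^k$ codewords (where $q$ is the field size), while the codomain has only $q^{k-1}$ elements, so by the pigeonhole principle $\pi$ cannot be injective: there exist two distinct codewords $c\neq c'$ with $\pi(c)=\pi(c')$. Then $c-c'$ is a nonzero codeword (by linearity) whose first $k-1$ coordinates are all zero, hence its Hamming weight is at most $n-(k-1)=n-k+1$. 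Since for a linear code the minimum distance equals the minimum weight of a nonzero codeword, this gives $d_{\min}\le n-k+1$.

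The key steps, in order, are: (i) define the puncturing map onto the first $k-1$ coordinates; (ii) compare cardinalities $q^k>q^{k-1}$ and invoke pigeonhole to get a collision; (iii) subtract the colliding codewords and use linearity to produce a nonzero codeword supported on at most $n-k+1$ coordinates; (iv) translate back to minimum distance via the weight–distance identity for linear codes. An essentially equivalent alternative I could present instead uses the generator matrix $G$: any $k-1$ columns of $G$ span a space of dimension at most $k-1<k$, so there is a nonzero left null vector, i.e.\ a nonzero message whose codeword vanishes on those $k-1$ coordinates, yielding the same weight bound.

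Since the statement is the classical Singleton bound, there is no real obstacle here; the only point requiring a little care is step (iii)–(iv), namely making sure the collision really yields a \emph{nonzero} codeword (which follows from $c\neq c'$ and linearity) and that bounding its weight bounds $d_{\min}$ (which uses that the code is linear, so distances are realized as weights). If one wanted the bound for general, possibly nonlinear, codes of size $q^k$, the same puncturing-plus-pigeonhole argument works verbatim on codewords rather than on a weight, bounding the minimum pairwise distance directly; but for the linear case stated in the theorem the argument above suffices.
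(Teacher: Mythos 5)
Your proof is correct: the puncturing-plus-pigeonhole argument (collision on the first $k-1$ coordinates, then passing to the nonzero difference and using the weight--distance equivalence for linear codes) is a complete and standard proof of the Singleton bound. Note that the paper itself gives no proof of this theorem --- it simply cites a coding-theory textbook --- so there is nothing to compare against; your argument fills that gap with the classical reasoning and is entirely adequate.
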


An $(n,k)$ code that satisfies the Singleton bound with equality is called a maximum distance separable (MDS) code. A key property of an MDS code is that it can correct any $(n-k)$ or less erasures. There are many ways to find MDS codes for any given $(n,k)$ values, $n\geq k$. For example, any randomly generated $n\times k$ matrix in a sufficiently large alphabet is a generator matrix for an MDS code with high probability. Any $n\times k$ Vandermonde matrix can also generate an MDS code when the entries in the second column are all distinct. Another explicit construction approach is by puncturing a Reed-Solomon code of an appropriate alphabet (see {\em e.g.}, \cite{Wicker:book}).

\subsection{Block Designs}

Block design has been considered in combinatorial mathematics with applications in experimental design, finite geometry, software testing, cryptography, and algebraic geometry. Generally speaking, a block design is a set together with a family of subsets ({\em i.e.}, blocks) whose members are chosen to satisfy some properties that are deemed useful for a particular application. Usually the blocks are required to all have the same number of elements, and in this case a given block design with parameter $(n,k)$ is specified by $(X,\mathcal{B})$ where $X$ is an $n$-element set and $\mathcal{B}$ is a collection of $k$-element subsets of $X$. 

One important class of block designs is the $t$-designs. The class of $t$-designs with parameter $(\lambda,t,r,n)$ is denoted as $S_{\lambda}(t,r,n)$; a valid $t$-design in $S_{\lambda}(t,r,n)$ is a pair $(X,\mathcal{B})$ where $X$ is $n$-element set and $\mathcal{B}$ is a collection of $r$-element subsets of $X$ with the property that every element in $X$ appears in exactly $\gamma$ blocks and every $t$-element subset of $X$ is contained in exactly $\lambda$ blocks. Without loss of generality, one can always use $X=I_n$, and we shall use this convention from here on.

The most extensively researched class of block designs is perhaps Steiner systems, which is the case when $\lambda=1$ and $t\geq 2$. In this case, the subscript $\lambda$ is usually omitted and we directly write it as $S(t,r,n)$. The simplest design in this class is when $t=2$ and $r=3$, which is the particularly well understood Steiner triple systems $S(2,3,n)$. It is known that there exists a Steiner triple system $S(2,3,n)$ if and only if $n=0$, or $n$ modulo $6$ is $1$ or $3$; see, {\em e.g.}, \cite{Colbourn:book}. It follows that the smallest positive integer which gives us a non-trivial Steiner system is $n=7$ and the next is $n=9$. Examples of $S(2,3,7)$, $S(2,3,9)$ are given in Table \ref{table:steiner79}, where a design for $S(2,4,13)$ is also included. Another well-known special class of $t$-designs is Balanced Incomplete Block Designs (BIBDs), which is a special case of $t$-designs for the case $t=2$. It is clear that Steiner systems $S(2,3,n)$ are also BIBDs. 
\begin{table}[t]
\caption{Example Steiner triple systems $S(2,3,7)$, $S(2,3,9)$ and $S(2,4,13)$.}
\label{table:steiner79}
\centering
\begin{tabular}{|c|c|}
\hline
$({I}_7,\mathcal{B})\in S(2,3,7)$&$\{(1,2,3),(1,4,5),(1,6,7),(2,4,6),(2,5,7),(3,4,7),(3,5,6)\}$\\\hline
$({I}_9,\mathcal{B})\in S(2,3,9)$&$\{(2,3,4),(5,6,7),(1,8,9),(1,4,7),(1,3,5),(4,6,8),$\\
&$(2,7,9),(2,5,8),(1,2,6),(4,5,9),(3,7,8),(3,6,9)\}$\\\hline
$({I}_{13},\mathcal{B})\in S(2,4,13)$&$\{(1,2,4,10),(2,3,5,11),(3,4,6,12),(4,5,7,13),(5,6,8,1),$\\
&$(6,7,9,2),(7,8,10,3),(8,9,11,4),(9,10,12,5),$\\
&$(10,11,13,6),(11,12,1,7),(12,13,2,8),(13,1,3,9)\}$\\\hline
\end{tabular}
\end{table}


For a given $(\lambda,t,r)$ triple, a $t$-design may not exist for an arbitrary $n$, however, for any $(t,r,n)$, a trivial $t$-design always exists with $\lambda^*(t,r,n)\triangleq{{n-t} \choose {r-t}}$, as given in the following proposition.
\begin{prop}
For any $(t,r,n)$ where $t\leq r\leq n$, a complete block design is a block design where the blocks are all the $r$-element subsets of ${I}_n$ (and the blocks are not repeated). In this design every element in ${I}_n$ appears in exactly ${{n-1}\choose {r-1}}$ blocks and every $t$-element subset of $X$ is contained in exactly $\lambda^*(t,r,n)$ blocks.
\end{prop}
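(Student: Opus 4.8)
The plan is to observe that the construction is self-evidently a valid block design --- the family $\mathcal{B}$ of all $r$-element subsets of $I_n$ consists of sets of constant size $r$, each listed once --- and then to verify the two regularity counts by a single elementary bijection applied twice.

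First I would fix an arbitrary element $x\in I_n$ and count the blocks $B\in\mathcal{B}$ with $x\in B$. Such a $B$ is uniquely determined by the set $B\setminus\{x\}$, which ranges over all $(r-1)$-element subsets of the $(n-1)$-element set $I_n\setminus\{x\}$; conversely each such subset yields a distinct block through $x$. Hence $x$ lies in exactly $\binom{n-1}{r-1}$ blocks, and this count does not depend on $x$, which is the first claim.

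Next I would fix an arbitrary $t$-element subset $T\subseteq I_n$ and apply the same idea with $T$ in place of $\{x\}$: a block $B\in\mathcal{B}$ with $T\subseteq B$ is uniquely determined by $B\setminus T$, an arbitrary $(r-t)$-element subset of the $(n-t)$-element set $I_n\setminus T$. Therefore $T$ is contained in exactly $\binom{n-t}{r-t}=\lambda^*(t,r,n)$ blocks, independently of $T$, which is the second claim (and, since it holds for this $t$, it also holds with $t$ replaced by any smaller value, as a $t$-design requires). As a consistency check one may note that summing the per-element count over all $n$ elements reproduces the identity $n\binom{n-1}{r-1}=r\binom{n}{r}=r|\mathcal{B}|$.

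There is no real obstacle here: the argument is pure double counting. The only points meriting a word of care are the degenerate parameter choices ($t=0$, or $t=r$, or $r=n$), in which the binomial coefficients above still evaluate to the correct --- if trivial --- values, and the remark that, because the blocks of $\mathcal{B}$ are pairwise distinct, "appears in exactly $N$ blocks" genuinely counts $N$ distinct blocks rather than $N$ incidences.
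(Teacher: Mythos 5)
Your argument is correct: the bijection $B \leftrightarrow B\setminus T$ between blocks containing a fixed $t$-subset $T$ and the $(r-t)$-subsets of $I_n\setminus T$ (and its special case $t=1$) gives exactly the counts $\binom{n-1}{r-1}$ and $\lambda^*(t,r,n)=\binom{n-t}{r-t}$ claimed. The paper states this proposition without proof, treating it as an immediate counting fact, and your double-counting argument is precisely the standard justification it implicitly relies on.
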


We may still refer to such a complete block design for the case of $t=2$ as a BIBD, although it is in fact a complete block design instead of an incomplete one. The following proposition \cite{Colbourn:book} is useful.



\begin{prop}
\label{theorem:numberofblocks}
If $({I}_n,\mathcal{B})$ is an $S_\lambda(t,r,n)$ design and $S$ is any $s$-element subset of ${I}_n$, with $0\leq s\leq t$, then the number of blocks containing $S$ is
\begin{align}
|\{B\in \mathcal{B}:S\subseteq B\}|=\lambda{{n-s \choose t-s}}{{r-s \choose t-s}}^{-1}.
\end{align}
\end{prop}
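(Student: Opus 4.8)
The plan is to prove this by a standard two-way counting argument, counting incidences between the $t$-element subsets of $I_n$ and the blocks. Fix an arbitrary $s$-element subset $S\subseteq I_n$ with $0\le s\le t$, and write $\lambda_s\triangleq|\{B\in\mathcal{B}:S\subseteq B\}|$ for the quantity to be evaluated. Introduce the incidence set
\begin{align}
\mathcal{P}\triangleq\{(T,B):\ T\subseteq I_n,\ |T|=t,\ S\subseteq T\subseteq B,\ B\in\mathcal{B}\},
\end{align}
and compute $|\mathcal{P}|$ in two different ways.

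First I would organize the count by the first coordinate $T$. For each fixed $t$-subset $T$ with $S\subseteq T$, the defining property of an $S_\lambda(t,r,n)$ design guarantees that $T$ is contained in exactly $\lambda$ blocks. The number of $t$-subsets $T$ with $S\subseteq T\subseteq I_n$ equals the number of ways to adjoin $t-s$ additional elements chosen from the $n-s$ elements of $I_n\setminus S$, i.e.\ $\binom{n-s}{t-s}$. Hence $|\mathcal{P}|=\lambda\binom{n-s}{t-s}$.

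Next I would organize the same count by the second coordinate $B$. A block $B$ with $S\not\subseteq B$ contributes no pairs. For a block $B$ with $S\subseteq B$, the number of $t$-subsets $T$ with $S\subseteq T\subseteq B$ is obtained by adjoining $t-s$ elements chosen from the $r-s$ elements of $B\setminus S$ (using $|B|=r$), which gives $\binom{r-s}{t-s}$ choices and, crucially, does not depend on which block $B$ is taken. Summing over the $\lambda_s$ blocks containing $S$ yields $|\mathcal{P}|=\lambda_s\binom{r-s}{t-s}$.

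Equating the two expressions gives $\lambda_s\binom{r-s}{t-s}=\lambda\binom{n-s}{t-s}$, and since $s\le t\le r$ forces $\binom{r-s}{t-s}\ge 1$, we may divide to obtain $\lambda_s=\lambda\binom{n-s}{t-s}\binom{r-s}{t-s}^{-1}$, which is exactly the claimed formula; in particular $\lambda_s$ turns out to depend only on $s$, not on the choice of $S$ (the cases $s=0$ and $s=1$ recovering the total number of blocks and the per-element replication number, respectively). There is no genuine obstacle here: the only subtlety is that the design axiom is phrased in terms of $t$-subsets, so the count has to be routed through the intermediate "layer" of $t$-subsets rather than relating $s$-subsets directly to blocks, and one should check that the boundary case $s=t$ is consistent, which it is since both sides then collapse to $\lambda$. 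An alternative presentation would phrase the same computation as a downward induction on $s$ starting from $s=t$, but the direct double count is the cleanest route.
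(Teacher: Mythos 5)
Your double-counting argument is correct and complete: counting the pairs $(T,B)$ with $S\subseteq T\subseteq B$, $|T|=t$, once via the design axiom and once via the blocks containing $S$ gives exactly $\lambda_s\binom{r-s}{t-s}=\lambda\binom{n-s}{t-s}$, and the observation that $S\subseteq T\subseteq B$ forces $S\subseteq B$ correctly disposes of the other blocks. The paper itself states this proposition without proof, citing the handbook of Colbourn and Dinitz; your argument is precisely the standard proof of that cited fact, so there is nothing to reconcile.
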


The following corollary apparently follows by setting $s=0$ in Theorem \ref{theorem:numberofblocks}.
\begin{corollary}
If $(I_n,\mathcal{B})$ is a $S_\lambda(t,r,n)$ design, then the total number of blocks in $\mathcal{B}$ is
\begin{align}
N_\lambda(t,r,n)\triangleq|\mathcal{B}|=\lambda{n \choose t}{r \choose t}^{-1}.
\label{eqn:N}
\end{align}
\end{corollary}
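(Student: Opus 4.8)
The plan is to obtain this as an immediate specialization of Proposition \ref{theorem:numberofblocks}. First I would take $s=0$ in that proposition, so that $S$ is the (unique) $0$-element subset of $I_n$, namely $S=\emptyset$. Since $\emptyset\subseteq B$ holds for every block $B\in\mathcal{B}$, the set $\{B\in\mathcal{B}:S\subseteq B\}$ is all of $\mathcal{B}$, and hence its cardinality is exactly $|\mathcal{B}|$. Substituting $s=0$ into the right-hand side of the formula in Proposition \ref{theorem:numberofblocks} gives
\begin{align}
|\mathcal{B}|=\lambda{{n-0}\choose{t-0}}{{r-0}\choose{t-0}}^{-1}=\lambda{n\choose t}{r\choose t}^{-1},
\end{align}
which is precisely \eqref{eqn:N}. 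The only hypothesis needed is $0\leq s\leq t$, which is satisfied by $s=0$ since $t\geq 2$ (indeed $t\geq 0$ suffices), so no additional case analysis is required.

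As an independent sanity check that does not rely on the general Proposition, I would also note that the identity can be derived by a direct double-counting argument: count the number of pairs $(B,T)$ with $B\in\mathcal{B}$ and $T$ a $t$-element subset of $B$. Summing over blocks, each block contributes $\binom{r}{t}$ such subsets, for a total of $|\mathcal{B}|\binom{r}{t}$. Summing over $t$-subsets of $I_n$, each such subset lies in exactly $\lambda$ blocks by the definition of a $t$-design, for a total of $\lambda\binom{n}{t}$. Equating the two counts yields $|\mathcal{B}|\binom{r}{t}=\lambda\binom{n}{t}$, and dividing by $\binom{r}{t}$ (which is nonzero since $t\leq r$) gives the claim.

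There is essentially no obstacle here: the corollary is a one-line consequence of Proposition \ref{theorem:numberofblocks}, and the only thing to verify carefully is the trivial observation that the empty set is contained in every block, so that the count in the Proposition reduces to $|\mathcal{B}|$ when $s=0$. I would therefore present the proof in a single sentence invoking the Proposition, optionally appending the double-counting remark above for completeness.
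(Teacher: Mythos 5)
Your proof is correct and follows exactly the paper's route: the corollary is obtained by setting $s=0$ in Proposition~\ref{theorem:numberofblocks}, observing that the empty set is contained in every block so the count is $|\mathcal{B}|$. The supplementary double-counting argument is a fine independent check but not needed; the paper states the same one-line specialization without further proof.
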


For the case of Steiner systems, we shall omit the subscript, and simply write it as $N(t,k,n)$. When the parameters are clear from the context, we may also write $N_\lambda(t,r,n)$ as $N^*$.

There are various known constructions,  existence results, and non-existence results for Steiner systems, BIBDs and $t$-designs in the literature; interested readers are referred to \cite{Colbourn:book} and  \cite{Bose:39} for more details.

\section{An Example $(9,7,8)$ Code}
\label{sec:example}

To illustrate the basic code components, we shall construct a $(9,7,8)$ exact-repair regenerating code with $M=23$, $\alpha=4$ and $\beta=1$.  The addition and multiplication operations in the encoding and decoding are in the finite field $\mathbb{F}(3)$, however this choice is only for better concreteness. The construction is based on the block design $S(2,3,9)$ given in Table \ref{table:steiner79}.

Let the information be given as a length-$23$ vector, where the $i$-th entry is denoted as $d_i\in \mathbb{F}(q)$. The components of this code are given described below. 

\textbf{Encoding:}

\begin{enumerate}
\item Generate a parity symbol $d_{24}=\sum_{j=1}^{12}d_{2j-1}+\sum_{j=1}^{11}2d_{2j}$;
\item Pair up $(d_{2j-1},d_{2j})$, and rename it as $(X_j,Y_j)$ where $j=1,2,\ldots,12$; {\em i.e.}, $(X_j,Y_j)\triangleq (d_{2j-1},d_{2j})$.
\item Generate a new parity symbol $P_{j}=X_{j}+Y_{j}$, and $(X_j,Y_j,P_j)$ will be referred to as a parity group, where $j=1,2,\ldots,12$;
\item For each block $B_j=\{b_{j,1},b_{j,2},b_{j,3}\}$, $j=1,2,\ldots,12$, in the block design, write one symbol in $j$-th parity group in the $b_{j,1}$-th disk, one in the $b_{j,2}$-th disk, and one in the $b_{j,3}$-th disk, respectively.
\end{enumerate}

One possible resulting code symbol placement is illustrated in Table. \ref{table:example9}. The placement is not unique, since within each parity group, the symbols can be permuted arbitrarily. Note that the second step above is for facilitating better understanding, and a more concise set of notations will be used in the general construction in the next section. 

\begin{table}[t]
\caption{Code constructed using the Steiner triple system $S(2,3,9)$ in Table \ref{table:steiner79}}
\label{table:example9}
\centering
\begin{tabular}{|c|c|c|c|c|c|c|c|c|c|}
\hline
Disk \#&1&2&3&4&5&6&7&8&9\\
\hline
&$P_3$&$X_1$&$Y_1$&$P_1$&$X_2$&$Y_2$&$P_2$&$X_3$&$Y_3$\\
&$X_4$&$X_7$&$Y_5$&$Y_4$&$P_5$&$Y_6$&$P_4$&$P_6$&$Y_7$\\
&$X_5$&$X_8$&$X_{11}$&$X_6$&$Y_8$&$P_9$&$P_7$&$P_{8}$&$P_{10}$\\
&$X_9$&$Y_9$&$X_{12}$&$X_{10}$&$Y_{10}$&$Y_{12}$&$Y_{11}$&$P_{11}$&$P_{12}$\\
\hline
\end{tabular}
\end{table}

\textbf{Repair:} 

Let us suppose the first disk fails. To regenerate, for example, symbol $X_5$, first obtain $Y_5$ and $P_5$ from disk-3 and disk-5, respectively, and then compute $X_5=P_5-Y_5$. Clearly, other symbols on the disk can also be repaired following a similar procedure. This procedure also applies to other disk failures. It can also be checked that for any disk failure, each remaining disk sends a single symbol during the repair, which is in fact guaranteed by the basic property of  block designs in this case. 


\textbf{Reconstruction:} 

For data reconstruction, several different cases need to be considered. Before going into the details of these cases, consider a scenario where disk $1$ and disk $2$ are not accessible. Notice that although $(X_1,X_4,X_5,X_7,X_8,X_9,Y_9,P_3)$ are not accessible directly, $(X_1,X_4,X_5,X_7,X_8,P_3)$ can be recovered using the symbols on other disks, as discussed in the repair procedure above; thus parity groups $1,2,3,4,5,6,7,8$ are not effected. As a consequence, only symbols in the $9$-th parity group can not be completely recovered, but even in this parity group, $P_9$ is still accessible on disk-$6$. The reconstruction cases can be classified according to which parity group is effected ({\em i.e.}, cannot be completely recovered directly) and which symbol within this parity group is still accessible. 
\begin{enumerate} 
\item The $j$-th parity group, $i\in I_{11}$, is effected, but $X_j$ or $Y_j$ is still accessible within it. An example case is when disk-$3$ and disk-$4$ are not accessible. Note $X_1=d_1$ is still available in this case, and $d_1+2 d_{2}$ can be computed using $d_{24}=Y_{12}=\sum_{j=1}^{12}d_{2j-1}+\sum_{j=1}^{11}2d_{2j}$ after eliminating $(d_{2j-1},d_{2j})$ pairs for $j=2,3,\ldots,11$ and $d_{23}$, from which $(d_1,d_{2})$ can be solved.  The information vector can be obtained by rearranging the symbols.
\item The $j$-th parity group, $j\in I_{11}$, is effected, but the parity symbol $P_j$ is still accessible within it. An example case is when disk-$2$ and disk-$3$ are not accessible. In this case $(d_{2j-1},d_{2j})$ pairs for $j=2,3,\ldots,12$ can be recovered. In addition, $P_1=d_1+d_{2}$ and $d_1+2 d_{2}$ are available, from which $(d_1,d_{2})$ can be solved.
\item Parity group $12$ is effected, but $X_{12}$ is still accessible. This case is trivial since all $d_j$, $j=1,2,\ldots,23$ have been directly recovered.
\item Parity group $12$ is effected, but $Y_{12}$ is still accessible. In this case $(X_j,Y_j)=(d_{2j-1},d_{2j})$ for $j=1,2,\ldots,11$ can be recovered, and thus only $d_{23}$ needs to be recovered. But we have $Y_{12}=d_{24}=\sum_{i=1}^{12}d_{2j-1}+\sum_{1}^{11}2d_{2j}$, from which $d_{23}$ can now be obtained. 
\item Parity group $12$ is effected, but $P_{12}$ is still accessible. Again $(d_{2j-1},d_{2j})$ pairs for $j=1,2,\ldots,11$ can be recovered. Additionally we have $P_{12}=d_{24}+d_{23}=\sum_{j=1}^{11}d_{2j-1}+\sum_{j=1}^{11}2d_{2j}+2d_{23}$, from which $d_{23}$ can be obtained. 
\end{enumerate}

Let us compare this code with the time-sharing code using an MBR code and an MSR code. For $(n,k,d)=(9,7,8)$, the MSR point is $(\bar{\alpha},\bar{M})=(2,14)$ and the MBR point is 
$(\bar{\alpha},\bar{M})=(8,35)$. Our construction achieves $(\bar{\alpha},\bar{M})=({4},{23})$, while the time sharing performance between the MBR point and the MSR point at $\bar{\alpha}=4$ gives $\bar{M}=21$, thus the example construction indeed achieves an improvement on $\bar{M}$ while keeping $\bar{\alpha}$ the same.

This example illustrates the main components in the proposed construction, {\em i.e.}, a block design, a first layer long MDS code, and a second layer short MDS code. The coefficients used in the two parity symbols of the two codes cannot be set arbitrarily, for example, if we were to set $P_{j}=X_{j}+2Y_{j}$, then in the second case  discussed in the reconstruction procedure, a decoding failure would occur. The basic idea is to use the short MDS code to recover as many data symbols as possible which will render most of the parity symbols in the short MDS code redundant, and then use the remaining parity symbol in the short MDS code together with the parity symbol in the long MDS code to jointly solve the remaining unknown data symbol. 

\section{Code Constructions}
\label{sec:dequalsnminus1}

In this section, we first describe an explicit code construction for $(n,n-2,n-1)$ code based on Steiner system $S(2,r,n)$. This construction however only applies to the case when a Steiner system exists for such $n$, and as aforementioned, Steiner systems may not exist for all $(r,n)$ pairs. Then based on BIBDs $S_\lambda(2,r,n)$, the method is generalized to the case any $(n,k,d)$ triples such that $k\leq n-1$ and $d=n-1$. Since a complete block design can be viewed as a special case of BIBDs, the construction applies to any value of positive integer $n$. This construction can be further generalized to the case when $d<n-1$, which will be discussed briefly. 

\subsection{A Construction Based on $S(2,r,n)$}

Given a block design $({I}_n,\mathcal{B})\in S(2,r,n)$, the exact-repair regenerating code with parameters $(n,k,d)=(n,n-2,n-1)$ we shall construct has the following parameters 
\begin{align}
\alpha=\frac{n-1}{r-1},\quad\beta=1,\quad M=(r-1)N^*-1=\frac{n(n-1)}{r}-1,
\end{align}
where we have used $N^*$ to denote $N(2,r,n)$ for notational simplicity.
Note that these parameters are all integers for a valid Steiner system, moreover, $n(n-1)$ is a multiple of $r(r-1)$, which can be seen using Proposition \ref{theorem:numberofblocks} and its corollary. The alphabet for this code can be chosen to be any finite field $\mathbb{F}(q)$ with a field size $q\geq r$, and the addition and multiplication operations in the encoding and decoding process are performed in this field. 

Let the $M$ information symbols in $\mathbb{F}(q)$ be given in a $(r-1)\times N^*$ matrix except the bottom-right entry $D_{r-1,N^*}$, which is left blank.
The code has several components:

\textbf{Encoding:}

\begin{enumerate}
\item Choose $(r-1)$ distinct non-zero elements $\phi_1,\phi_2,\ldots,\phi_{r-1}$ in $\mathbb{F}(q)$, which satisfy $\phi_i+1\neq 0$ for $i=1,2,\ldots,r-2$. Generate a parity symbol and assign it to $D_{r-1,N^*}$ as
\begin{align}
\label{eqn:longMDS}
D_{r-1,N^*}=\sum_{i=1}^{r-2}\phi_i\sum_{j=1}^{N^*} D_{i,j}+\phi_{r-1}\sum_{j=1}^{N^*-1}D_{r-1,j}.
\end{align}
\item For each column $j=1,2,\ldots,N^*$, generate new parity symbols as 
\begin{align}
D_{r,j}\triangleq P_{j}=\sum_{i=1}^{r-1}D_{i,j}.
\label{eqn:paritygroup}
\end{align} 
The collection $(D_{1,j},D_{2,j},\ldots,D_{r-1,j},P_j)$ will be referred to as the $j$-th parity group;
\item For each block $B_j=\{b_{j,1},b_{j,2},\ldots,b_{j,r}\}\in \mathcal{B}$, $j=1,2,\ldots,N^*$, distribute the symbols in the $i$-th parity group onto disk $b_{j,1},b_{j,2},\ldots,b_{j,r}$, one symbol onto each disk. 
\end{enumerate}

\begin{figure}[tcb]
  \centering
  \includegraphics[width=10cm]{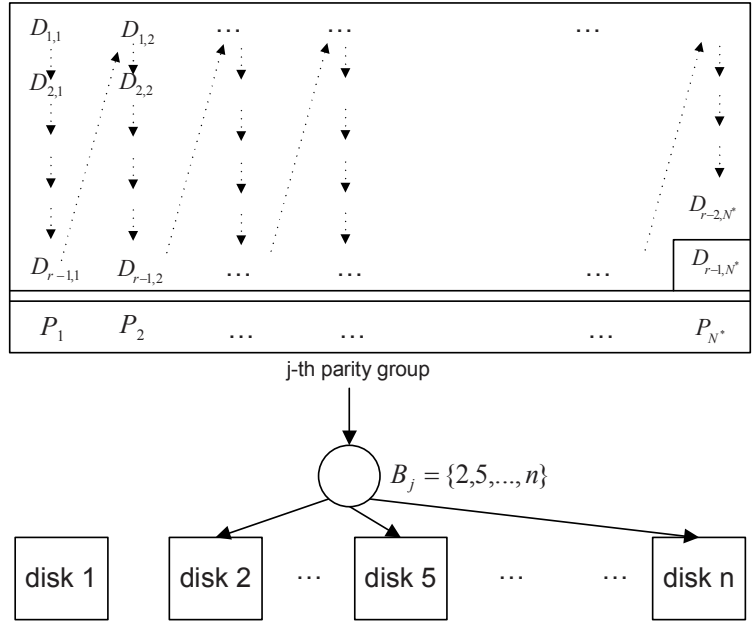}
  \caption{Code structure based on  $S(2,r,n)$.\label{fig:code1}}
\end{figure}

\textbf{Repair:} 

Suppose disk-$m$ fails. In order to recover the symbols on this disk, find in $\mathcal{B}$ all blocks $B_j$ such that $m\in B_{j}$. Recall there are a total of $\alpha$ such blocks, and let them be denoted as $B_{k_1},B_{k_2},\ldots,B_{k_\alpha}$. For each of this block $B_{k_l}$, $l=1,2,\ldots,\alpha$, obtain the symbols in the parity group $k_l$ from the disks in the set $B_{k_l}\setminus \{m\}$, and recover the symbol in this parity group on disk-$j$ using the relation (\ref{eqn:paritygroup}). 

\textbf{Reconstruction:} 

Several cases need to be considered, when two disks have failed:
\begin{enumerate} 
\item The $j$-th parity group loses two symbols which are the parity symbol $P_j$ and one data symbol $D_{i,j}$, and the other parity groups each lose one symbol or less. This implies that the other parity groups can recover all its data using (\ref{eqn:paritygroup}), and thus only $D_{i,j}$ needs to be recovered. It can be obtained through $D_{r-1,N^*}$, by eliminating  in (\ref{eqn:longMDS}) the symbols in the other parity group, and then eliminating $D_{k,j}$, $k\neq i$. 

\item The $j$-th parity group loses two symbols which are two data symbols $D_{i_1,j}$ and $D_{i_2,j}$, and the other parity groups each lose one symbol or less. 
Other data symbols can be obtained as in the previous case, and only $D_{i_1,j}$ and $D_{i_2,j}$ need to be recovered. Since $D_{r-1,N^*}$ is still available, by eliminating the symbols in the other parity group in (\ref{eqn:longMDS}), and then eliminating $D_{k,j}$, $k\neq i_1$ and $k\neq i_2$, we obtain $\phi_{i_1}D_{i_1,j}+\phi_{i_2}D_{i_2,j}$. By eliminating $D_{k,j}$, $k\neq i_1$ and $k\neq i_2$ in (\ref{eqn:paritygroup}), we obtain $D_{i_1,j}+D_{i_2,j}$. Since $\phi_{i_1}\neq \phi_{i_2}$ and they are both non-zero, $D_{i_1,j}$ and $D_{i_2,j}$ can be solved using these two equations.

\item Parity group $N^*$ loses two symbols, which are the parity symbols $D_{r-1,N^*}$ and $P_{N^*}$. This case is trivial since all data symbols have been directly recovered. 

\item Parity group $N^*$ loses two symbols, which are the parity symbols $P_{N^*}$ and a data symbol $D_{i,N^*}$, $1\leq i\leq r-2$. By eliminating the symbols in the other parity group in $D_{r-1,N^*}$ using (\ref{eqn:longMDS}), and then eliminating $D_{k,N^*}$,  $k\neq i$, we obtain $D_{i,N^*}$. 

\item Parity group $N^*$ loses two symbols, which are the parity symbols $D_{r-1,N^*}$ and a data symbol $D_{i,N^*}$,  $1\leq i\leq r-2$. Note that $P^{N^*}$ is still available and  
\begin{align}
P_{N^*}=\sum_{i=1}^{r-1}D_{i,N^*}=\sum_{i=1}^{r-2}\phi_i\sum_{j=1}^{N^*} D_{i,j}+\phi_{r-1}\sum_{j=1}^{N^*-1}D_{r-1,j}+\sum_{i=1}^{r-1}D_{i,N^*}.
\end{align}
By eliminating the symbols in the other parity group in $P_{N^*}$ and then eliminating $D_{k,N^*}$, $k\neq i$, we obtain $(\phi_{i}+1)D_{i,N^*}$ for some $1\leq i\leq r-2$, and since $\phi_{i}+1\neq 0$ for such $i$, $D_{i,N^*}$ can be correctly obtained. 
\end{enumerate}

The code construction is illustrated in Fig. \ref{fig:code1}.
In the disk repair and data reconstruction procedure given above, we have inherently assumed that the following two facts hold:
\begin{itemize}
\item \textbf{Fact one:} During the repair, each remaining disk contributes exactly one symbol;
\item \textbf{Fact two:} When two disks are not accessible, only one parity group has two  inaccessible symbols, and the other parity groups each have only one symbol or less inaccessible symbol. 
\end{itemize}
These are indeed true by invoking the basic property of Steiner system, more precisely, that any pair of elements in ${I}_n$ appears exactly in one of the blocks in $\mathcal{B}$.

The long code in the construction is an $(M+1,M)$ systematic MDS code whose parity symbol is specified by (\ref{eqn:longMDS}), and the short code is a $(r,r-1)$ systematic MDS code whose parity symbol is specified by (\ref{eqn:paritygroup}). It should be noted that the coefficients in forming the parity symbols are certainly not unique, and we have only given a convenient choice here. In many cases, the performance of codes is better than time-sharing between MSR and MBR points, however, we leave the detailed analysis to the next section to avoid repetition.



\subsection{A Construction Based on BIBDs $S_\lambda(2,r,n)$}

In this subsection, we generalize the construction previously described to the setting of exact-repair regenerating codes for any positive integer $n$, $d=n-1$ and any $k\leq n-1$, based on BIBDs $S_{\lambda}(2,r,n)$. The validity of the construction relies on application of the Schwarz-Zippel lemma, which is used to show that there exists a valid choice of long MDS code when the alphabet is large than a given threshold. 

First fix a BIBD $({I}_n,\mathcal{B})\in S_{\lambda}(2,r,n)$, and again denote $N_\lambda(2,r,n)$ as $N^*$. First define the quantity
\begin{align}
T(A)=\sum_{B\in\mathcal{B}:|B\cap {A}|\geq 2} |B\cap A|-1,
\end{align}
where ${A}\subset {I}_n$ and $|{A}|=n-k$, then further define
\begin{align}
T=\max_{{A}:{A}\subseteq {I}_n,\,|{A}|=n-k} T(A).
\end{align}
The relevance of this quantity will become clear shortly. 
When $n-k=2$, the definition of BIBDs gives $T=\lambda$. The construction given in the previous subsection belongs to this case with $T=\lambda=1$.
In general, the quantity is dependent on the particular block design, and does not appear to have an explicit formula, however, we shall discuss a bound on this quantity in the next section. 

The code we construct has the following parameters
\begin{align}
\alpha = \frac{\lambda(n-1)}{r-1},\quad \beta = \lambda,\quad M=(r-1)N^*-T=\frac{\lambda n(n-1)}{r}-T.
\end{align}
Note that although $\alpha$ is always an integer, $(n-1)$ is not necessarily a multiple of $r-1$ here, unlike in the previous construction. This implies that $\bar{\alpha}$ may not be an integer.

Let the $M$ information symbols in $\mathbb{F}(q)$ be given in a vector $\vec{d}$, and use it to fill the first $M$ entries in a $(r-1)\times N^*$ matrix $D$ following the column-wise order, {\em i.e.}, the first column (top-down), and the second column, etc.; the rest of the $T$-entries of the matrix are left blank. The code requires a matrix $S$ of size $T\times M$, whose entries are also in $\mathbb{F}(q)$. The matrix $S$ is used to generate the parity symbols for the long MDS code, and we shall specify the condition for $S$ shortly. 



\textbf{Encoding:}

The encoding procedure is similar to the procedure given in the previous subsection, with the only difference being that we first compute the multiplication $S\cdot\vec{d}$ and then fill the rest of $D$ matrix using the resultant $T$ parity symbols in a column-wise manner. 

\begin{figure}[tcb]
  \centering
  \includegraphics[width=10cm]{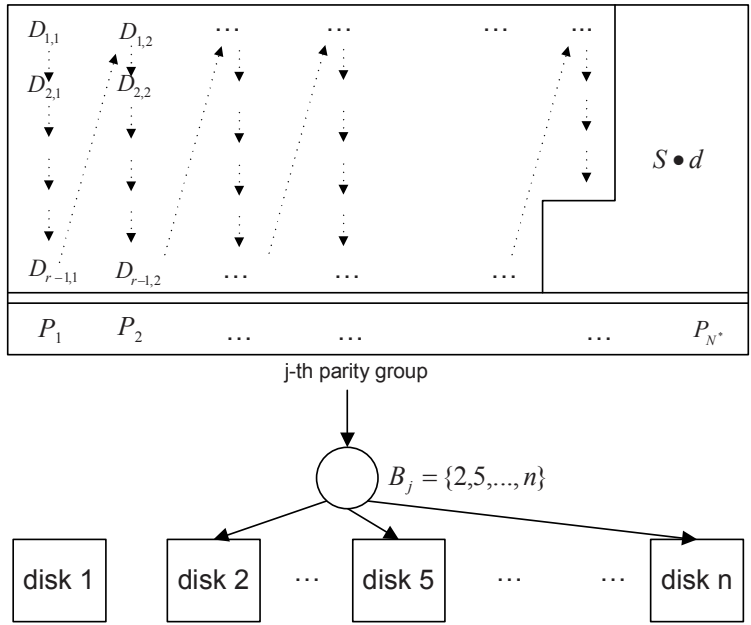}
  \caption{Code structure based on  $S_\lambda(2,r,n)$.\label{fig:code2}}
\end{figure}

\textbf{Repair:}

The repair is precisely the same as the repair procedure given in the previous subsection. Note that each remaining disk contributes exactly $\lambda$ symbols, which is implied by the definition of $S_\lambda(2,r,n)$.

\textbf{Reconstruction:}


Let  $(n-k)$ disks in the set $A$ be inaccessible, where $|A|=n-k$. 
For each parity group $j=1,2,\ldots,N^*$, construct a length-$r$ vector $\vec{z_i}$ as follows
\begin{itemize}
\item If $B_j\cap A \leq 1$: collect, and if necessary, compute using (\ref{eqn:paritygroup}), the symbols $D_{i,j}$, $i=1,2,\ldots,r-1$; let $\vec{z_j}=(D_{1,j},D_{2,j},\ldots,D_{r-1,j},0)^t$;
\item If $B_j \cap A \geq 2$: collect the available symbols in this parity group, denoted as $(D_{i_1,j},D_{i_2,j},\ldots,D_{i_l,j})$, assign $D_{i_1,j},D_{i_2,j},\ldots,D_{i_l,j}$  to the $i_1,i_2,\ldots,i_l$ positions of vector $\vec{z_j}$, and let the rest of $\vec{z_j}$ be zeros.
\end{itemize}
Finally let $\vec{\bar{d}}_A=[\vec{z_1}^t,\vec{z_2}^t,\ldots,\vec{z_{N^*}}^t]^t$, {\em i.e.}, concatenate the vectors $\vec{z_j}$\rq{}s. 
The entries of $\vec{\bar{d}}_A$ are linear combinations of $\vec{d}$. Our claim is that by properly choosing $S$, the vector $\vec{d}$ can be reconstructed from $\vec{\bar{d}}_A$ for any possible set $A$.

\vspace{0.6cm}
This construction is illustrated in Fig. \ref{fig:code2}, from which the difference and similarity from the construction given in the previous subsection is straightforward. For the case $r=2$, the proposed construction is precisely the repair-by-transfer construction in \cite{RashmiShah:12:1}. In this case, the parity symbol $P_j$ is a simple repetition, and the $S_\lambda(2,2,n)$ design is when $\lambda=\lambda^*=1$ in the trivial complete design.

%



Next we show that a matrix $S$ with the desired properties indeed exists. Note that as long as the transfer matrix between $\vec{\bar{d}}_A$ and $\vec{d}$ has rank $M$, the information vector $\vec{d}$ can be correctly reconstructed. To identify this matrix,  first construct a template matrix $R$ of size $r\times (r-1)$ as $R=[I,\vec{1}^t]^t$, where $I$ is the identity matrix, and $\vec{1}$ is the all one vector of length $r-1$. For each $j=1,2,\ldots,N^*$, construct matrix $R_j$ of size $r\times (r-1)$ as follows,
\begin{itemize}
\item If $B_j\cap A \leq 1$, let $R_j$ be $R$ with the last row set to all zeros;
\item If $B_j \cap A =l\geq 2$, then let the corresponding symbols in the $i$-th parity group stored on disks $B_j \setminus A$ be $D_{i_1},D_{i_2},\ldots,D_{i_l}$. Keep the rows $i_1,i_2,\ldots,i_l$ in $R$, and assign the other rows as all zeros, and let the resultant matrix be $R_j$.
\end{itemize}
Finally form a matrix $Q_A$ of size $(rN^*)\times(r-1)N^*$ using matrix $R_j$\rq{}s as the diagonal, {\em i.e.},
\begin{align}
Q_A=\left[
\begin{array}{cccc}
R_1&&&\\
&R_2&&\\
&&\ddots&\\
&&&R_{N^*}
\end{array}
\right]
\end{align}
Clearly, we have
\begin{align}
Q_A\cdot G\cdot \vec{d}=\vec{\bar{d}}_A,
\end{align}
where $G=[I, S^t]^t$. Thus as long as $Q_A\cdot G$ has rank $M$ for each set $A\subset I_n$ such that $|{A}|=n-k$, the information vector $\vec{d}$ can be correctly decoded no matter which $n-k$ disks are inaccessible. We have the following proposition.



\begin{prop}
Among the $q^{TM}$ distinct assignments of $S$, at most a fraction of $q^{-1}{{n\choose k}TM}$ may induce a matrix $Q_A\cdot G$ with rank less than $M$ for some $A\subset I_n$ such that $|{A}|=n-k$.
\end{prop}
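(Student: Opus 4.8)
The plan is to treat the $TM$ entries of $S$ as formal indeterminates over $\mathbb{F}(q)$ and combine the Schwarz-Zippel lemma with a union bound over the ${n \choose k}$ sets $A$. Fix $A\subset I_n$ with $|A|=n-k$. Every entry of $G=[I,S^t]^t$ is either a constant ($0$ or $1$) or a single entry of $S$, and every entry of $Q_A$ is a constant, so each entry of the $(rN^*)\times M$ matrix $Q_A\cdot G$ is an affine form in the entries of $S$; hence every $M\times M$ minor of $Q_A\cdot G$ is a polynomial of total degree at most $M$ in those entries. Suppose for the moment that, for this $A$, at least one such minor $P_A$ is not the zero polynomial. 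Then the assignments of $S$ for which $Q_A\cdot G$ has rank less than $M$ all lie in the zero set of $P_A$, which by Schwarz-Zippel contains at most $M q^{TM-1}$ points. Summing over the ${n \choose k}$ choices of $A$ bounds the number of bad assignments by ${n \choose k}M q^{TM-1}$, a fraction at most ${n \choose k}M/q$ of all $q^{TM}$ assignments; since $T\ge1$ this is at most $q^{-1}{n \choose k}TM$. So the whole statement reduces to showing: for every $A$ there is a nonzero $M\times M$ minor of $Q_A\cdot G$, equivalently, over some extension field there is an assignment of $S$ for which $Q_A\cdot G$ has full column rank $M$.

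To prove this reduced claim I would, for each $A$, exhibit a suitable $S$ under which $\vec{d}$ can be reconstructed from $\vec{\bar{d}}_A$. The combinatorial input is the defining property of the BIBD $S_\lambda(2,r,n)$ -- every pair of points of $I_n$ lies in exactly $\lambda$ blocks -- so any parity group $j$ with $|B_j\cap A|\le1$ loses at most one of its $r$ symbols, and the $(r,r-1)$ short MDS code (\ref{eqn:paritygroup}) then recovers all of $D_{1,j},\ldots,D_{r-1,j}$ exactly; consequently every data symbol belonging to such an ``unaffected'' group, and every long-code parity symbol sitting in such a group, is obtained directly. For an affected group $j$, with $|B_j\cap A|=l_j\ge2$, using the available symbols together with the short code leaves precisely $l_j-1$ residual degrees of freedom in that group. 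After substituting all symbols already recovered, the $T$ long-code parity equations $S\cdot\vec{d}$ thus collapse to a homogeneous linear system in $\sum_j(l_j-1)=T(A)$ unknowns whose coefficient matrix is, up to an invertible change of coordinates induced by the short-code decodings, a $T\times T(A)$ submatrix of $S$. Since $T(A)\le T$ by the definition of $T$, choosing the rows of $S$ generically -- for instance as rows of a Vandermonde matrix over a large extension field -- makes this $T\times T(A)$ matrix have rank $T(A)$, which forces the residual unknowns, hence all of $\vec{d}$, to vanish whenever $\vec{\bar{d}}_A=\vec{0}$. Thus $Q_A\cdot G$ has rank $M$ for this $S$, and the corresponding minor polynomial is not identically zero.

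The main obstacle is the linear-algebraic bookkeeping in this last step. One must verify that the short-code decoding in each affected group really does collapse that group's contribution to exactly $l_j-1$ \emph{independent} residual unknowns (this is precisely why $T(A)$ carries the ``$-1$'' per affected block), that the residuals from distinct affected groups are jointly independent as linear functionals of $\vec{d}$, and that enough of the $T$ long-code parities survive -- or are recoverable up to the same residual freedom -- to constrain these $T(A)$ unknowns. In particular the parity group $N^*$ requires separate treatment, exactly as in the five-case reconstruction analysis of the $S(2,r,n)$ construction, since its symbols include the long-code parities themselves. Once the rank-$T(A)$ statement has been checked for each individual $A$, no further joint design across the ${n \choose k}$ sets $A$ is needed: the Schwarz-Zippel estimate together with the union bound automatically produces a single $S$ that works simultaneously for every $A$ as soon as $q>{n \choose k}TM$.
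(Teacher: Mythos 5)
Your first paragraph reproduces the paper's own argument for the proposition proper: treat the entries of $S$ as indeterminates, note that each $M\times M$ minor of $Q_A\cdot G$ has degree at most $M$ (the paper uses the cruder bound $TM$), apply the Schwartz--Zippel lemma in counting form for each fixed $A$, and union-bound over the ${n\choose k}$ choices of $A$. That reduction is fine. The problem is the reduced claim: showing that for every $A$ some $M\times M$ minor of $Q_A\cdot G$ is not the zero polynomial. This is exactly the content the paper defers to the appendix, where it is proved by an explicit construction: partition $Q^*_A$ into blocks with $Q_{12}=0$, build an auxiliary matrix $H$ out of unit rows aligned with the zeroed rows of the templates $R_j$ so that $[Q_{11}^t,H^t]$ has rank $M$, and then solve $H=Q_{21}+Q_{22}\cdot S$ for $S$ block by block using the column-span structure of each $R_j$ (with a separate argument when the partial block $R_{a+1}$ has no all-zero row among its bottom $r-b$ rows). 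Your proposal replaces this with the assertion that, after short-code decoding, the residual system's coefficient matrix is ``up to an invertible change of coordinates a $T\times T(A)$ submatrix of $S$,'' so that a generic (e.g.\ Vandermonde) $S$ works. That assertion is not justified, and it is not accurate in general: the $T$ long-code parity symbols are themselves stored as entries of $D$ (filling the last $T$ cells column-wise, hence possibly spread over several of the last parity groups, not only group $N^*$), so when such a symbol lies in an affected group the available constraint is a mixture of a row (or rows) of $S$ with the short-code parity relation, not a pure submatrix of $S$; moreover some of the $T$ parity equations may simply be unavailable. The example in Section \ref{sec:example} (where changing the short-code coefficients to $P_j=X_j+2Y_j$ breaks decodability) shows this coupling between the two layers is where the real difficulty sits, and the paper itself only offers ``take $S$ from a systematic MDS code and \emph{check} the rank conditions'' as a heuristic, not a proof that generic or MDS-like $S$ suffices.

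You do flag these verifications (independence of the residual unknowns across affected groups, survival or recoverability of enough long-code parities, special treatment of the groups holding the long-code parity symbols) as ``the main obstacle,'' but you never carry them out, and without them the nonvanishing of $\det(Q'_A\cdot G)$ --- the single nontrivial ingredient of the proposition --- is unproven. So the proposal is the paper's approach in outline, with its essential step (the appendix argument constructing a witness assignment of $S$ for each $A$) left as a gap.
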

\begin{proof}
The proof is a direct application of the Schwartz-Zippel lemma in its counting form. For each $A\subset I_n$ such that $|{A}|=n-k$, if we can show that the fraction of assignments resulting in $\mbox{rank}(Q_A\cdot G)<M$ is bounded by $q^{-1}TM$, then the bound given in the proposition is obtained by a simple union over all choices of $A$. To show this, first remove the all-zero rows in $Q_A$, and then remove the first $T-T(A)$ rows in the remaining matrix, resulting in a matrix  $Q\rq{}_A$. Note that $Q\rq{}_A\cdot G$ is of size $M\times M$, and thus as long as $Q\rq{}_A\cdot G$ has full rank, the matrix $Q_A\cdot G$ has rank $M$. However, $Q\rq{}_A\cdot G$ having full rank is equivalent to $\det(Q\rq{}_A\cdot G)\neq 0$. Since $\det(Q\rq{}_A\cdot G)$ is a polynomial $g(\cdot)$ of the entries of $S$, as long as $g(\cdot)$ is not identically zero, we can apply the Schwartz-Zippel lemma and conclude the proof. The polynomial $g(\cdot)$ is indeed not identically zero, which is proved in the appendix.
\end{proof}

As a consequence of this proposition, when $q>{n \choose k} TM$, there exists at least one valid choice of matrix $S$; in fact, when $q$ is sufficiently large, almost all the assignments of $S$ are valid. The problem of explicitly constructing the matrix $S$ is open, however it may not be as complex as it seems. One possible approach is to  let $S$ be the parity portion of a systematic MDS code generator matrix, and then check whether the full rank conditions are satisfied for each possible set $A\subset I_n$ with $|A|=n-k$, which is a total of ${n \choose k}$ conditions.

\subsection{A Construction Based on $t$-Designs $S_{\lambda}(t,r,n)$}

The code construction for exact-repair regenerating codes presented in the previous section can be generalized to the case $d<n-1$, by using general $t$-design instead of BIBDs. The resulting codes may require different amounts of data contributions from disks during repair, and thus do not strictly belong to the class of codes defined in Section \ref{sec:definition}. For this reason, instead of considering per-disk rate $\beta$ during repair, we shall only consider total repair bandwidth $\gamma$ here. A special class of code, based on complete block designs $S_{\lambda^*}(t,r,n)$, can be made symmetric by time-sharing among different repair rate allocations, as shall be discussed shortly. 

Given a particular $t$-design $(I_n,\mathcal{B})\in S_{\lambda}(t,r,n)$, we shall construct an exact-repair regenerating codes of parameter $(n,k,d)$ using $(X,\mathcal{B})$, where $d=n-t+1$ and $k\leq d$. Similarly as in the last subsection, define the quantity
\begin{align}
T(A)=\sum_{B\in\mathcal{B}:|B\cap {A}|\geq t} |B\cap A|-t+1,
\end{align}
where ${A}\subset {I}_n$ and $|{A}|=n-k$, then further define
\begin{align}
T=\max_{{A}:{A}\subseteq {I}_n,\,|{A}|=n-k} T(A).
\end{align}

The code we construct has the following parameters (note instead of $\beta$, here $\gamma$ is given)
\begin{align}
\label{eqn:performancegeneral}
\alpha = \frac{\lambda{{n-1} \choose {t-1}}}{{{r-1} \choose {t-1}}},\quad \gamma = {(r-t+1)\alpha},\quad M=(r-t+1)N_\lambda(t,r,n)-T.
\end{align}

\begin{figure}[tcb]
  \centering
  \includegraphics[width=10cm]{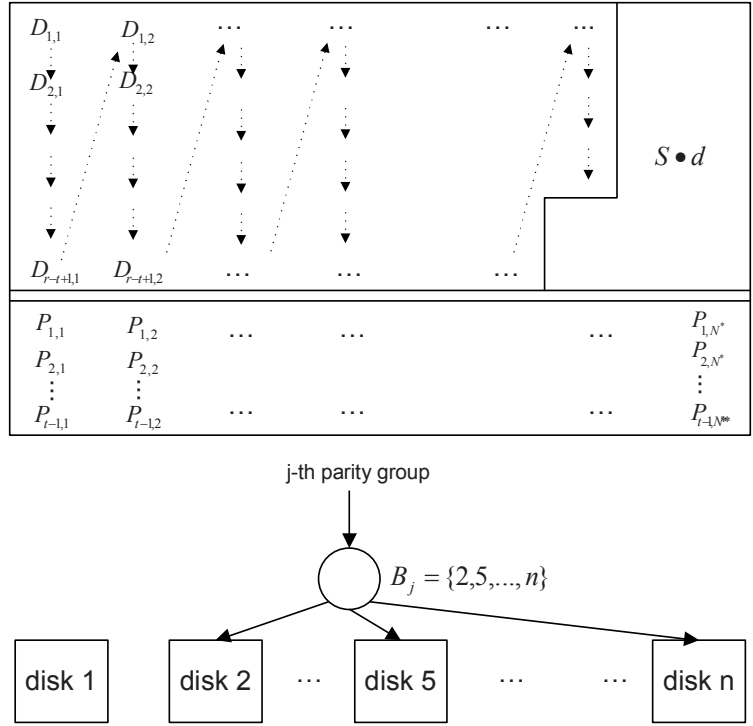}
  \caption{Code structure based on  $S_\lambda(t,r,n)$.\label{fig:code3}}
\end{figure}

The difference from the construction given in the last section is that instead of only one parity symbol, $t-1$ parity symbols $P_{1,j},P_{2,j},\ldots,P_{t-1,j}$ are generated in the parity group $j$, using a fixed systematic $(r,r-t+1)$ MDS code; see Fig. \ref{fig:code3}. Any symbol on a failed disk has a maximum of $n-d=t-1$ symbols from the same parity group that are not participating in the repair, however the $(r,r-t+1)$ MDS code guarantees that this symbol can be recovered using the remaining at least $r-t+1$ symbols on the other disks. Note that the amounts of data contributions from these disks are in general not symmetric, although there may be many choices to choose which symbols to use in the repair. Similarly as in the previous case, it can be shown that there exists a matrix $S$ of size $T\times M$ which guarantees correct decoding in a sufficiently large alphabet, and thus we omit the details to avoid repetition.

One particularly interesting case is when the complete block design $S_{\lambda^*}(t,r,n)$ is used. In this case, although the data contributions from the disks during repair may not be symmetric, one can always time-share among different helper rate contribution allocations, due to the symmetry of the code. Thus this time-sharing version of the codes based on complete block design indeed belongs to the class of codes defined in Section \ref{sec:definition}.

\section{Performance Analysis}
\label{sec:performance}
In this section, we analyze the performance of the proposed codes more systematically. Recall that the quantity $T$ includes an optimization problem, and it is block design dependent. Thus in general $(\alpha,\beta,M)$ of a particular code can not be explicitly evaluated. However, for the codes based on complete block designs, the performance can indeed be explicitly evaluated. Moreover, for any given $(r,n)$, a complete block design $S_{\lambda^*}(2,r,n)$ in fact offers the best performance among all possible BIBDs $S_{\lambda}(2,r,n)$, in terms of the normalized measure $(\bar{\alpha},\bar{M})$, which is shown next.

\subsection{The Optimality of Complete Block Designs}

\begin{prop}
Given an $(n,k,d)$ code $\mathcal{C}_1$ constructed using $t$-design of $(I_n,\mathcal{B})\in S_{\lambda}(n-d+1,r,n)$, which achieves $(\bar{\alpha},\bar{M}_1)$, and an $(n,k,d)$ code $\mathcal{C}_2$ constructed using the complete design $S_{\lambda^*}(n-d+1,r,n)$, which achieves $(\bar{\alpha},\bar{M}_2)$. Then $\bar{M}_1\leq \bar{M}_2$, with inequality holds if and only if in $(I_n,\mathcal{B})$ the quantity $T_{A}$ is uniform for all set $A\subset I_n$ with $|A|=n-k$.
\end{prop}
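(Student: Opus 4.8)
The strategy is to compare the two codes directly via their defining parameters. Both codes use the same block-design parameters $(t,r,n)$ with $t=n-d+1$, and both have $\alpha$, $\gamma$, and the "per-block contribution" $r-t+1$ determined solely by $(\lambda,t,r,n)$ through \eqref{eqn:performancegeneral}. The only place the particular design $\mathcal{B}$ enters the storage-bandwidth profile is through the quantity $M=(r-t+1)N_\lambda(t,r,n)-T$, and hence through $\bar{M}=M/\beta$. So the first step is to normalize correctly: for $\mathcal{C}_1$ we have $\beta_1=\lambda$ (or the appropriate scaling for the $t$-design case), while for $\mathcal{C}_2$ we have $\beta_2=\lambda^*=\lambda^*(t,r,n)$. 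I would express $\bar{\alpha}$ and $\bar{M}$ for each code as $\bar{\alpha}=\alpha/\beta$ and $\bar{M}=M/\beta$, observe that $\bar{\alpha}$ is the same for both (since $\alpha$ scales with $\lambda$ in exactly the same way $\beta$ does, by \eqref{eqn:performancegeneral}), and then reduce the claim $\bar{M}_1\le\bar{M}_2$ to a statement comparing $T/\lambda$ for $\mathcal{C}_1$ against $T^*/\lambda^*$ for $\mathcal{C}_2$, where $T^*$ is the $T$-quantity of the complete design.

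\textbf{Key steps.} After the normalization, the heart of the argument is to show that the complete design minimizes the normalized penalty $T/\lambda$ over all valid $S_\lambda(t,r,n)$ designs. The plan is: (i) write $T=\max_{A:|A|=n-k}T(A)$ and $T(A)=\sum_{B:|B\cap A|\ge t}\bigl(|B\cap A|-t+1\bigr)$; (ii) for a fixed $A$, rewrite $T(A)=\sum_{B}\max\bigl(|B\cap A|-t+1,\,0\bigr)$ and use Proposition~\ref{theorem:numberofblocks} applied to all $s$-subsets of $A$ (for $s\ge t$) to compute $\sum_{B}\binom{|B\cap A|}{s}=\binom{n-k}{s}\cdot\lambda\binom{n-s}{t-s}\binom{r-s}{t-s}^{-1}$, which is an exact identity depending only on $(\lambda,t,r,n,k)$ and \emph{not} on $\mathcal{B}$; (iii) express $\max(m-t+1,0)$ for integer $m\ge 0$ as a fixed nonnegative linear combination $\sum_{s\ge t}c_s\binom{m}{s}$ of binomial coefficients (a Newton-forward-difference / finite-difference expansion — the coefficients $c_s$ are fixed integers, in fact $c_t=1$ and higher ones alternate but the combination is an average so it works out), so that summing over $B$ gives that $\sum_{B}\max(|B\cap A|-t+1,0)$ has the \emph{same value for every design}; hence $T(A)$ does not depend on $\mathcal{B}$ at all! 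At that point $T_A$ is literally a function of $|A|$ only, so $T=\max_A T(A)$ is the same for $\mathcal{C}_1$ and $\mathcal{C}_2$ up to the $\lambda$ versus $\lambda^*$ scaling, and $\bar{M}_1=\bar{M}_2$ — which would contradict the strict-inequality claim. So the expansion in (iii) must \emph{not} be design-independent, meaning $\max(m-t+1,0)$ is \emph{not} a linear combination of $\binom{m}{s}$ for $s\ge t$ alone (it has lower-order terms $\binom{m}{s}$, $s<t$, whose design-sums $\sum_B\binom{|B\cap A|}{s}$ for $s<t$ are also fixed by the design \emph{only if} $A$ is small enough that every $s$-subset lies in the "regular" regime — but for $s<t$, Proposition~\ref{theorem:numberofblocks} still applies since $s\le t$, so those sums are \emph{also} design-independent). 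This forces me to re-examine: the genuine design-dependence must come from the $\max(\cdot,0)$ truncation interacting with the \emph{multiset} of values $\{|B\cap A|\}_B$, which is not determined by its low moments.

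\textbf{The main obstacle.} The crux, and the step I expect to be hardest, is exactly this: the moments $\sum_B\binom{|B\cap A|}{s}$ for $0\le s\le t$ are fixed by the design parameters via Proposition~\ref{theorem:numberofblocks}, but $T(A)=\sum_B\max(|B\cap A|-t+1,0)$ depends on higher-order information about the distribution of intersection sizes. The right tool is a convexity / majorization argument: $x\mapsto\max(x-t+1,0)$ is convex, and among all nonnegative integer sequences $\{|B\cap A|\}_{B\in\mathcal{B}}$ with the prescribed sum $\sum_B|B\cap A|=\lambda\binom{n-t}{r-t}^{-1}\binom{n-1}{t-1}|A|\cdot(\text{appropriate constant})$ (fixed by counting incidences) and prescribed count $|\mathcal{B}|=N_\lambda$, the value of $\sum_B\max(|B\cap A|-t+1,0)$ is minimized when the sequence is as "flat" as possible. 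The complete design, being the most symmetric, realizes (after accounting for the $\lambda^*$ scaling) the flattest possible normalized profile, so it minimizes $T_A/\lambda$ for every $A$, hence minimizes $T/\lambda=\max_A T_A/\lambda$; and equality in $\bar{M}_1=\bar{M}_2$ holds iff the complete design's profile already achieves the per-$A$ minima for the maximizing $A$'s, which happens precisely when $T_A$ is uniform over all $A$ with $|A|=n-k$ in $(I_n,\mathcal{B})$ — because non-uniformity of $T_A$ in a general BIBD translates, under the averaging that defines the complete design, into a strictly smaller maximum. Assembling this requires (a) an exact incidence-counting identity for $\sum_B|B\cap A|$ to pin down the first moment, (b) a Jensen/convexity inequality to lower-bound $\sum_B\max(|B\cap A|-t+1,0)$, (c) verifying the complete design attains the bound, and (d) tracking the equality condition carefully through the $\max$ over $A$; step (d), the equality characterization, is the delicate part and is where the uniformity-of-$T_A$ hypothesis enters.
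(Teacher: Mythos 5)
There is a genuine gap at the heart of your plan. The claim you lean on --- that the complete design ``realizes the flattest possible normalized profile'' and therefore minimizes $T(A)/\lambda$ \emph{for every} set $A$ --- is false in general, and in fact it cannot be true precisely in the cases where the proposition asserts strict inequality. The reason is an exact averaging identity that your argument never isolates: for any design $(I_n,\mathcal{B})\in S_\lambda(t,r,n)$ and any fixed block $B$, averaging $f(|B\cap A|)$ with $f(m)=\max(m-t+1,0)$ over all $(n-k)$-subsets $A$ gives the same hypergeometric average as a uniformly random $r$-subset intersected with a fixed $A$; summing over blocks yields
\begin{align}
\frac{1}{\binom{n}{n-k}}\sum_{A:|A|=n-k} \frac{T(A)}{\lambda}=\frac{T^*}{\lambda^*},
\end{align}
i.e.\ the complete design's (uniform) normalized value is exactly the \emph{average} of the incomplete design's normalized values $T(A)/\lambda$, not a per-$A$ lower bound. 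Consequently, whenever $T(A)$ is non-uniform, some sets $A$ satisfy $T(A)/\lambda<T^*/\lambda^*$, directly contradicting your per-$A$ dominance claim; what is true is only $T/\lambda=\max_A T(A)/\lambda\geq T^*/\lambda^*$, with equality iff $T(A)$ is constant --- which is the proposition, and which follows in one line from the identity above (this is essentially the paper's proof, phrased there by symmetrizing the design over all $n!$ permutations and using that the compound design is the complete design with each block repeated $n!\lambda/\lambda^*$ times). Your Jensen/majorization substitute does not deliver this: the minimizer of $\sum_B f(|B\cap A|)$ under a fixed block count and first moment is a constant intersection profile, and the complete design's profile is hypergeometric, not constant, so no flatness argument certifies it as a per-$A$ minimizer.

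Your normalization step (reducing the comparison to $T/\lambda$ versus $T^*/\lambda^*$) is correct, and your parenthetical remark about ``the averaging that defines the complete design'' gestures at the right mechanism, but the plan as written rests on the per-$A$ minimization and on an unproved (and unprovable) flatness property, with the equality characterization in step (d) left to an argument that conflicts with the averaging identity. Replace the convexity/majorization step by the max-versus-mean inequality applied to the identity above (or, equivalently, the paper's permutation-symmetrization), and the equality case falls out immediately as uniformity of $T(A)$ over all $A$ with $|A|=n-k$.
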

\begin{proof}

The $T$  function on both $(I_n,\mathcal{B})\in S_{\lambda}(n-d+1,r,n)$ and on the complete design  $S_{\lambda^*}(n-d+1,r,n)$ need to be considered, and in order to distinguish them, we shall the latter as $T^*$.

Consider the block design resulting from a permutation $\pi$ of the elements of $I_n$, which also operates on the blocks in $(I_n,\mathcal{B})$, and denoted the $T(A)$ function on this permuted block design as $T_\pi(A)$, and the $T$ function on this permuted block design as $T_\pi$. Construct a new and larger block design by taking all the blocks resulting through the $n!$ permutation of the design $(I_n,\mathcal{B})$; note that there might be repetition of the same blocks, which is allowed in this new design. We shall denote the $T(A)$ function operating on this new design as $T_p(A)$, and the corresponding $T$ function as $T_p$. 

This new compound design is apparently a complete block design where each block is repeated $\frac{n!\lambda}{\lambda^*}$ times, and thus 
\begin{align}
T_p=\frac{n!\lambda}{\lambda^*}T^*.
\end{align}
However, since it is the combination of $n!$ permutation of the block design  $(I_n,\mathcal{B})$,
we also have
\begin{align}
T_p=T_p(I_{n-k})=\sum_{\pi} T_\pi(I_{n-k})\leq \sum_{\pi} T_\pi=\sum_{\pi} T=n! T.
\end{align}
Thus $\lambda^* T\geq \lambda T^*$, which when combined with (\ref{eqn:N}) and (\ref{eqn:performancegeneral}), gives $\bar{M}_1\leq \bar{M}_2$. Clearly equality holds if and only if $T_\pi(I_{n-k})=T$ for all $\pi$, which is equivalent to $T(A)=T$ for all $A\subset I_n$ with $|A|=n-k$. The proof is complete.
 
\end{proof}

Although complete block designs provide the best $(\bar{\alpha},\bar{M})$ among all the $t$-designs in the same class, other incomplete block designs may lead to simpler code, as illustrated in the following example. 

\textbf{Example:} Consider a code based on the complete block design $S_{7}(2,3,9)$, and thus $T=\lambda^*=7$. Using the general code construction based on BIBDs, we have a $(9,7,8)$ exact-repair regenerating code with
\begin{align}
\alpha = 28,\quad \beta = 7, \quad M=161.
\end{align}
The $S$ matrix for the first layer code in this case is of size $7\times 161$, {\em i.e.}, $7$ parity symbols generated by $161$ information symbols. In contrast, in the example given in Section \ref{sec:example}, also a $(9,7,8)$ exact-repair regenerating code, has a first layer code with only a single parity symbol, generated by $23$ information symbols.  Note however both code achieve the same normalized measure $(\bar{\alpha},\bar{M})=(4,23)$.

\subsection{Performance Analysis Using Complete Block Designs}

Recall that $S_{\lambda^*}(t,r,n)$ can be used to construct codes with different $k$ values, where $t=n-d+1$. With complete block designs, the value $T$ can be explicitly evaluated as follows using the symmetry
\begin{align}
T=T(I_{n-k})&=\sum_{B\in\mathcal{B}:|B\cap I_{n-k}|\geq t} |B\cap I_{n-k}|-t+1\nonumber\\
&=\sum_{B\in\mathcal{B}:|B\cap I_{n-k}|\geq n-d+1} |B\cap I_{n-k}|-n+d\nonumber\\
&=\sum_{i=t}^{\min(n-k,r)}(i-n+d){{n-k}\choose i}{k\choose {r-i}}
\triangleq T_c.
\end{align}

It is clear that the code has a normalized $\bar{\alpha}$ as
\begin{align}
\label{eqn:finalalpha}
\bar{\alpha} = \frac{d}{r-n+d},
\end{align}
and a normalized $\bar{M}$ as
\begin{align}
\bar{M}=\frac{nd}{r}-\frac{dT_c}{ (r+d-n){{n-1}\choose {r-1}}}.
\label{eqn:finalM}
\end{align}
Clearly for $\bar{\alpha}\geq 0$, we need $r\geq n-d+1$. Since at the MSR point, $\bar{\alpha}=d-k+1$, it is more meaningful to choose
\begin{align}
r\leq n-d+\frac{d}{d-k+1}.
\end{align}
However, choosing $r$ greater than this value is also valid, which may yield codes that although not efficient in terms of $(\bar{\alpha}, \bar{M})$, but nevertheless useful due to its simplicity.


\begin{figure}[tcb]
  \centering
  \includegraphics[width=10cm]{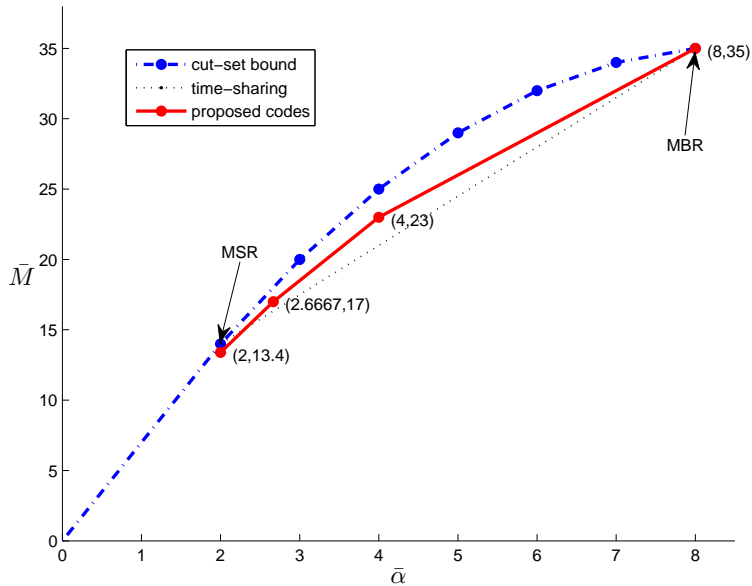}
  \caption{Cut-set bound, time-sharing line and the performance of the proposed codes for  $(n,k,d)=(9,7,8)$.\label{fig:example1}}
\end{figure}

%

\begin{figure}[tcb]
  \centering
  \includegraphics[width=15cm]{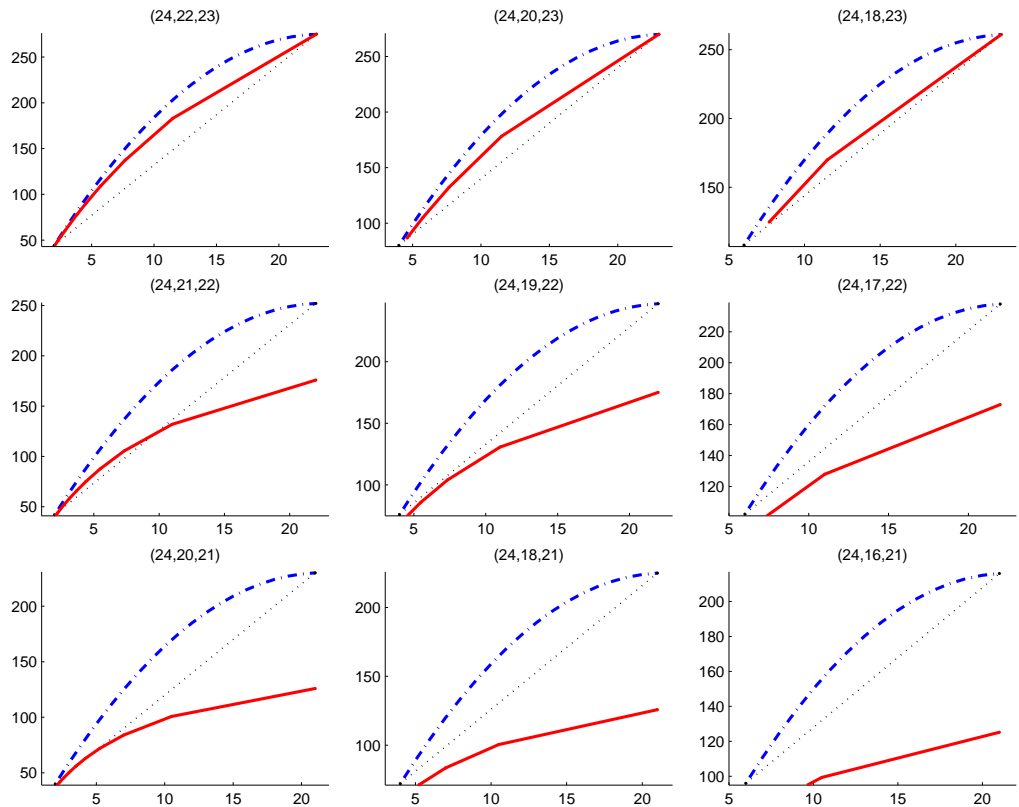}
  \caption{Performance of the proposed codes for different $(k,d)$ parameters when $n=24$.\label{fig:example2} The dashed blue lines are the cut-set bounds, the dotted black lines are the time-sharing lines, and the red solid lines are the tradeoff achieved by the proposed codes. }
\end{figure}

There does not seem to be any simplification for specific $(n,k,d)$ parameters. We provide a few examples to illustrate the performance of the code for various $(n,k,d)$. In Fig. \ref{fig:example1}, we plot the performance of the proposed codes for the case of $(n,k,d)=(9,7,8)$, and for reference the cut-set bound and time-sharing line are also included. It can be seen that in addition to the code example given in Section \ref{sec:example}, there is one more parameter $r=4$ that yields a performance above the time-sharing line; the proposed code also achieves the point $(8,35)$, which is not surprising since in this case it reduces to the optimal construction in \cite{RashmiShah:12:1}. The operating point $(\bar{\alpha},\bar{M})=(2,13.4)$ is worth noting, because although it is not as good as the MSR point $(2,14)$, the penalty is surprisingly small. This suggests that the proposed codes may even be a good albeit not optimal choice to replace an MSR code, particularly when such MSR codes have high complexity.

In Fig. \ref{fig:example2} we plot the performance of codes for different parameters $(k,d)$ when $n=24$. It can be seen that when $d=n-1=23$, the performance is the most competitive, and often superior to the time sharing line. As $d$ value decreases, the method become less effective in terms of its $(\bar{\alpha},\bar{M})$, and becomes worse than the time-sharing line. For the same $d$ value, the code is most effective when $k$ is large, and becomes less so as $k$ value decreases. It should be noted that the lower left corner is the MSR point, and in a wide range of parameters the proposed scheme in fact operates rather close to it, despite the simple coding structure.

\subsection{An Asymptotic Analysis}

\begin{figure}[tcb]
  \centering
  \includegraphics[width=10cm]{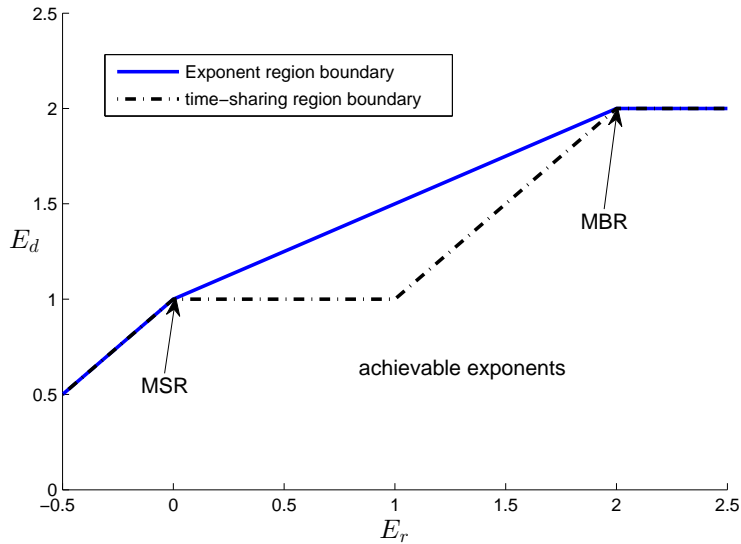}
  \caption{The achievable exponent region $\mathcal{E}$ and the exponent region achieved by time-sharing.\label{fig:asymptotics}}
\end{figure}

In this subsection, we consider the asymptotic performance of the proposed codes when $n$ is large. Recall the case under consideration is when $k=n-\tau_1$ and $d=n-\tau_2$ where $\tau_1$ and $\tau_2$ are fixed constant integers. We have the following theorem.
\begin{theorem}
Let $\mathcal{E}^*$ be the collections of $(\mathsf{E}_r,\mathsf{E}_d)$ pair such that
\begin{align}
\mathsf{E}_d\leq \mathsf{E}_r+1,\quad 2\mathsf{E}_d\leq 2+\mathsf{E}_r,\quad \mathsf{E}_d\leq 2, 
\end{align}
Then $\mathcal{E}^*=\mathcal{E}$, and moreover,  $\mathcal{E}^*$ can be asymptotically achieved by the proposed code construction. 
\end{theorem}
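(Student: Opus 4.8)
The statement has two halves: first, that the region $\mathcal{E}^{*}$ coincides with $\mathcal{E}$ (i.e.\ it is exactly the exponent region achievable by \emph{any} exact-repair regenerating code, not just our construction), and second, that $\mathcal{E}^{*}$ is achievable by the proposed codes. The plan is to handle the converse (outer bound) and the achievability separately.

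\emph{Outer bound ($\mathcal{E}\subseteq\mathcal{E}^{*}$).} First I would invoke Theorem~\ref{theorem:cutset}. With $d=n-\tau_{2}$ and $k=n-\tau_{1}$, and writing the piecewise-linear sum $\sum_{i=0}^{k-1}\min(\bar{\alpha},d-i)\geq\bar{M}$, I would track the leading-order behavior of $\bar{M}$ and of the redundancy $\bar{M}-n\bar{\alpha}$ as $n\to\infty$, taking logarithms base $n$. Two of the three extreme regimes of the cut-set bound give the boundary constraints: the MSR regime $\bar{\alpha}\approx d-k+1=\tau_{1}-\tau_{2}+1=O(1)$ forces $\bar{M}=\Theta(n)$ hence $\mathsf{E}_{d}\le 1$, which is subsumed by $\mathsf{E}_{d}\le\mathsf{E}_{r}+1$ once one checks the redundancy exponent $\mathsf{E}_{r}$ in that regime; the MBR regime $\bar{\alpha}\approx d$, $\bar{M}\approx\tfrac{k(2d-k+1)}{2}=\Theta(n^{2})$ gives $\mathsf{E}_{d}\le 2$; and the intermediate part of the cut-set bound, where $\bar\alpha$ scales like a constant times $n$, produces the coupled constraint $2\mathsf{E}_{d}\le 2+\mathsf{E}_{r}$ after eliminating that constant. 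The bookkeeping here is the place to be careful: I need to argue that every achievable $(\bar\alpha,\bar M)$ sequence, once its scaling exponent in $\bar\alpha$ is identified, lands in one of these regimes, and that the $\epsilon$ slack and the $\limsup$ in the definition of $\mathcal{E}$ do not create any extra points. This is mostly a matter of continuity and of the piecewise-linearity noted after Theorem~\ref{theorem:cutset}.

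\emph{Achievability ($\mathcal{E}^{*}\subseteq\mathcal{E}$ via the construction).} For this half I would use the complete-block-design codes and their explicit performance formulas (\ref{eqn:finalalpha}) and (\ref{eqn:finalM}), with $t=n-d+1=\tau_{2}+1$ and $n-k=\tau_{1}$. The key observation is that we have two free scaling knobs: the block size $r$ (which I would let grow as $r=\lceil\rho n\rceil$ for a parameter $\rho\in(0,1]$, or stay $O(1)$), and the choice of $\tau_1,\tau_2$. From (\ref{eqn:finalalpha}), $\bar\alpha=d/(r-n+d)$, and with $d=n-\tau_2$ this is $\approx n/(r-\tau_2)$, so $\bar\alpha=\Theta(n^{1-\mathsf{E}_{r}\text{-type exponent}})$ depending on how fast $r$ approaches $n$; and from (\ref{eqn:finalM}) one extracts $\bar M$ and the redundancy $\bar M-n\bar\alpha$ to leading order. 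I would then compute $\mathsf{E}_{r}=\lim\log_n(\bar M-n\bar\alpha)$ and $\mathsf{E}_{d}=\lim\log_n\bar M$ as functions of $\rho$ (and of the discrete choices), show that as $\rho$ ranges over its allowed values these $(\mathsf{E}_{r},\mathsf{E}_{d})$ pairs sweep out the entire boundary of $\mathcal{E}^{*}$, and invoke closure/$\limsup$ to fill in the interior. The estimate of $T_{c}$, i.e.\ the sum $\sum_{i=t}^{\min(n-k,r)}(i-n+d)\binom{n-k}{i}\binom{k}{r-i}$, is needed only to the accuracy of its dominant exponent, so I would bound it crudely: since $n-k=\tau_{1}$ is a constant, the sum has at most $\tau_1$ terms and $\binom{n-k}{i}=O(1)$, so $T_c=\Theta(\binom{k}{r-\tau_1+\text{const}})$, which controls the correction term in (\ref{eqn:finalM}).

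\emph{Main obstacle.} I expect the delicate step to be matching the \emph{intermediate} regime exponents on the two sides: showing that the single family of achievable points coming from the complete-block-design construction, parametrized by the growth rate of $r$, exactly traces the facet $2\mathsf{E}_{d}=2+\mathsf{E}_{r}$ of the converse region, with no gap. Concretely this means computing $\bar M - n\bar\alpha$ from (\ref{eqn:finalM}) to leading order —a subtraction of two $\Theta$-quantities that could cancel at the leading order and require the next term— and verifying that the leading surviving exponent is precisely the one the cut-set bound predicts. The MSR-corner and MBR-corner exponents are comparatively easy (they are just the two endpoints $(\mathsf{E}_r,\mathsf{E}_d)$ with $\mathsf{E}_d=1$ and $\mathsf{E}_d=2$), and the existence of a valid matrix $S$ in a large enough field is already guaranteed for every finite $n$ by the Schwartz--Zippel argument in the previous section, so it poses no asymptotic difficulty. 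Finally I would remark that time-sharing between the MSR and MBR points gives only the line segment between $(\mathsf{E}_r,\mathsf{E}_d)=(\,\cdot\,,1)$ and $(\,\cdot\,,2)$, which lies strictly inside $\mathcal{E}^{*}$, establishing the separation claimed in the surrounding text.
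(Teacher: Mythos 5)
Your overall plan mirrors the paper's (cut-set asymptotics for the converse, complete-block-design codes with a tunable block size $r$ for achievability), but the specific achievability family you propose would fail at exactly the point you flag as the main obstacle. With $d=n-\tau_2$, equation (\ref{eqn:finalalpha}) gives $\bar{\alpha}^{(n)}=(n-\tau_2)/(r-\tau_2)$, and the redundancy of the complete-design code obeys $n\bar{\alpha}^{(n)}-\bar{M}^{(n)}\leq \tau_1 n(n-\tau_2)/\bigl(r(r-\tau_2)\bigr)$ once one bounds $T_c\leq(\tau_1-\tau_2)\binom{n}{r}$. Hence for $r=\lceil \rho n\rceil$ with any fixed $\rho\in(0,1]$ you get $\bar{\alpha}^{(n)}=\Theta(1)$, $\bar{M}^{(n)}=\Theta(n)$ and bounded redundancy, so every such sequence lands at the single corner $(\mathsf{E}_r,\mathsf{E}_d)=(0,1)$; the other option you allow, $r=O(1)$, lands at $(2,2)$. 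Neither choice produces any interior point of the facet $2\mathsf{E}_d=2+\mathsf{E}_r$, so the family as parametrized cannot sweep the boundary of $\mathcal{E}^*$. The missing idea is to let $r$ grow as a \emph{sublinear power} of $n$: taking $r\approx n^{\epsilon}$ with $0<\epsilon<1$ gives $\bar{\alpha}^{(n)}=\Theta(n^{1-\epsilon})$, $\bar{M}^{(n)}\geq n^{1-\epsilon}(n-\tau_2)$ (the $T_c$ correction in (\ref{eqn:finalM}) is a vanishing relative factor after the bound above), and $n\bar{\alpha}^{(n)}-\bar{M}^{(n)}=O(n^{2-2\epsilon})$, i.e. $(\mathsf{E}_r,\mathsf{E}_d)=(2-2\epsilon,2-\epsilon)$, which traces the entire facet as $\epsilon$ varies, with the corner points recovered by closure. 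This is exactly what the paper does. (Your crude estimate of $T_c$ is harmless--only an upper bound of the above type is needed--though the dominant term is actually $\binom{k}{r-\tau_2-1}$, coming from the smallest index $i=\tau_2+1$.)

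On the converse your sketch takes the same route as the paper but leaves the substantive step unproved: classifying achievable sequences into \lq\lq{}regimes\rq\rq{} according to the scaling of $\bar{\alpha}^{(n)}$ is itself what must be justified, since the three inequalities defining $\mathcal{E}^*$ have to hold for every achievable sequence, including ones without a clean scaling exponent. The paper closes this by deriving from Theorem \ref{theorem:cutset} the family of linear bounds $\bar{M}\leq c\bar{\alpha}+\frac{(2d-k-c+1)(k-c)}{2}$ for every integer $c\in[0,k]$: $c=0$ yields $\mathsf{E}_d\leq 2$, $c=k$ yields $\bar{M}\leq k\bar{\alpha}$ and hence $\mathsf{E}_d\leq\mathsf{E}_r+1$, and choosing $c\approx\frac{2d+1}{2}-\frac{\bar{M}}{k}$ (valid when $\bar{M}\geq(d-k+\frac{1}{2})k$, i.e. when $\mathsf{E}_d\geq 1$) yields $n\bar{\alpha}-\bar{M}\geq\frac{\bar{M}^2}{2k^2}-\frac{3}{8}$ and hence $\mathsf{E}_r\geq 2\mathsf{E}_d-2$. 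Some explicit optimization of this kind, uniform over all achievable pairs, is what your \lq\lq{}bookkeeping\rq\rq{} step needs to become before the converse is complete.
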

\begin{proof}
We first show that $\mathcal{E}\subseteq\mathcal{E}^*$ by utilizing the cut-set bound in Theorem \ref{theorem:cutset}. For better clarify, we shall write $(\bar{\alpha},\bar{M})$ for a fixed $n$ explicitly as $(\bar{\alpha}^{(n)},\bar{M}^{(n)})$. First notice that the bound implies that for any integer $c \in [0,k]$
\begin{align}
\bar{M}^{(n)}\leq \sum_{i=0}^{k-1}\min(\bar{\alpha}^{(n)},(d-i))\leq \sum_{i=0}^{c-1}\bar{\alpha}^{(n)}+\sum_{c}^{k-1}(d-i)=c\bar\alpha^{(n)}+\frac{(2d-k-c+1)(k-c)}{2}.\label{eqn:boundrewrite}
\end{align}
Taking $c=0$ gives
\begin{align}
\bar{M}^{(n)}\leq \frac{(2d-k+1)k}{2},
\end{align}
which implies that
\begin{align} 
\mathsf{E}^{(n)}_d&\leq \frac{\log \bar{M}^{(n)}}{\log n}\leq \frac{\log(2d-k+1)k-\log 2}{\log n}\nonumber\\
&=\frac{\log(2d-k+1)k-\log 2}{\log n}=\frac{\log(d+\tau_1-\tau_2+1)k-\log 2}{\log n},
\end{align}
and thus for any $(\mathsf{E}_r,\mathsf{E}_d)\in\mathcal{E}$, $\mathsf{E}_d\leq 2$. 
By taking $c=k$, we have
\begin{align}
\bar{M}^{(n)}\leq k\bar\alpha^{(n)},\label{eqn:takectobek}
\end{align}
which implies that
\begin{align}
n\bar\alpha^{(n)}-\bar{M}^{(n)}\geq \frac{n-k}{k}\bar{M}^{(n)}.
\end{align}
It follows that
\begin{align}
\mathsf{E}^{(n)}_r\geq \frac{\log(n\bar\alpha^{(n)}-\bar{M}^{(n)})}{\log n}\geq \frac{\log\frac{n-k}{k}\bar{M}^{(n)}}{\log n}=\frac{\log(n-k)-\log k}{\log n}+\mathsf{E}^{(n)}_d,
\end{align}
and thus $\mathsf{E}_r-\mathsf{E}_d\geq -1$ for any $(\mathsf{E}_r,\mathsf{E}_d)\in\mathcal{E}$. 
Next rewrite (\ref{eqn:boundrewrite}) when as follows
\begin{align}
n\bar{\alpha}^{(n)}-\bar{M}^{(n)}&\geq (n-c)\bar{\alpha}-\frac{(2d-k-c+1)(k-c)}{2}\nonumber\\
&\geq (n-c)\frac{\bar{M}}{k}-\frac{(2d-k-c+1)(k-c)}{2},
\end{align}
where the second inequality is due to (\ref{eqn:takectobek}). Rearrange the right hand side of the above inequality to be a quadratic function in $c$, we have
\begin{align}
n\bar{\alpha}^{(n)}-\bar{M}^{(n)}\geq &-\frac{c^2}{2}+\frac{(2d+1-\frac{2\bar{M}}{k})}{2}c-\frac{(2d-k+1)k}{2}+\frac{n\bar{M}}{k}\nonumber\\
\geq &-\frac{1}{2}(c-\frac{2d+1}{2}+\frac{\bar{M}}{k})^2+\frac{1}{2}(\frac{2d+1}{2}-\frac{\bar{M}}{k})^2-\frac{(2d-k+1)k}{2}+\frac{n\bar{M}}{k}.
\end{align}
When the following condition holds
\begin{align}
\bar{M}^{(n)}\geq (d-k+\frac{1}{2})k,
\end{align}
that is
\begin{align}
\mathsf{E}^{(n)}_d\geq \frac{\log (d-k+\frac{1}{2})k}{\log n}, \label{eqn:Errange}
\end{align}
we can choose 
\begin{align}
c=\left\lfloor\frac{2d+1}{2}-\frac{\bar{M}}{k}\right\rfloor,
\end{align}
and arrive at the bound
\begin{align}
n\bar{\alpha}^{(n)}-\bar{M}^{(n)}\geq &-\frac{1}{2}+\frac{1}{2}(\frac{2d+1}{2}-\frac{\bar{M}}{k})^2-\frac{(2d-k+1)k}{2}+\frac{n\bar{M}}{k}\nonumber\\
&=\frac{\bar{M}^2}{2k^2}+\frac{1}{2}(d-k)^2+\frac{1}{2}(d-k)+(n-d-\frac{1}{2})\frac{\bar{M}}{k}-\frac{3}{8}\nonumber\\
&\geq \frac{\bar{M}^2}{2k^2}-\frac{3}{8}.
\end{align}
This implies that when (\ref{eqn:Errange}) is true, for any sufficiently large $n$ and any $\delta>0$
\begin{align}
\mathsf{E}^{(n)}_r\geq 2\mathsf{E}^{(n)}_d-2-\delta,
\end{align}
and thus $\mathsf{E}_r\geq 2\mathsf{E}_d-2$ when $\mathsf{E}_d\geq 1$. This completes the converse proof. 

For the forward proof, we shall first fix a quantity $0<\epsilon<1$, and consider a sequence of code $n=n_0,n_0+1,\ldots$. From (\ref{eqn:finalalpha}), we get
\begin{align}
\bar{\alpha}^{(n)} = \frac{d}{r-n+d} = \frac{n-\tau_2}{n^\epsilon-\tau_2}. 
\end{align}

Note that 
\begin{eqnarray}
T^* &=&\sum_{i=t}^{\min(n-k,r)}(i-n+d){{n-k}\choose i}{k\choose {r-i}}\\
&\le&\sum_{i=t}^{\min(n-k,r)}(d-k){{n-k}\choose i}{k\choose {r-i}}\\
&=&(\tau_1-\tau_2)\sum_{i=t}^{\min(n-k,r)}{{n-k}\choose i}{k\choose {r-i}}\\
&\le&(\tau_1-\tau_2)\sum_{i=0}^{r}{{n-k}\choose i}{k\choose {r-i}}\\
&=&(\tau_1-\tau_2){{n}\choose r}
\end{eqnarray}
Thus from (\ref{eqn:finalM}) we have
\begin{eqnarray}
\bar{M}^{(n)}&=& \frac{nd}{r}-\frac{dT^*}{ (r+d-n){{n-1}\choose {r-1}}}\\
&\ge& \frac{nd}{r}-(\tau_1-\tau_2)\frac{d{{n}\choose r}}{ (r+d-n){{n-1}\choose {r-1}}}\\
&=& \frac{nd}{r}-(\tau_1-\tau_2)\frac{dn}{ (r+d-n)r}\\
&=& \frac{nd}{r}\left(1-\frac{\tau_1-\tau_2}{ (r-\tau_2)}\right)\\
&=& n^{1-\epsilon} (n-\tau_2) \left(1-\frac{\tau_1-\tau_2}{ (r-\tau_2)}\right)\\
&\geq&n^{1-\epsilon} (n-\tau_2).
\end{eqnarray}
This implies that for any $\delta>0$ and sufficiently large $n$, there exists a code using the proposed design such that $\mathsf{E}^{(n)}_d\geq 2-\epsilon-\delta$. 

Furthermore, we have
\begin{eqnarray}
 n \bar{\alpha}^{(n)} - \bar{M}^{(n)}\le \frac{n^{1-\epsilon}\tau_1(n-\tau_2)}{n^\epsilon - \tau_2},
\end{eqnarray}
which implies that for any $\delta>0$ and sufficiently large $n$, there exists a code using the proposed design such that $\mathsf{E}^{(n)}_r \leq 2-2\epsilon+\delta$. 
Because the region $\mathcal{E}$ is a closed set, it follows that the pair $(\mathsf{E}_r,\mathsf{E}_d)=(2-2\epsilon,2-\epsilon)$ is achievable for any $1>\epsilon>0$, and thus the region $2\mathsf{E}_d\leq 2+\mathsf{E}_r$ is achievable for any $2>\mathsf{E}_r>0$. The case $(\mathsf{E}_r,\mathsf{E}_d)=(0,1)$ can be simply addressed by taking a sequence of $\epsilon_{m}$ such that $\epsilon_m\rightarrow 1$; the case of $(\mathsf{E}_r,\mathsf{E}_d)=(2,2)$ can be addressed similarly. The regions when $\mathsf{E}_r>2$ and $\mathsf{E}_r<0$ are degenerate and they are easily shown achievable either by increasing unnecessarily the redundancy in the code, or increasing unnecessarily the amount of repair bandwidth. The proof is thus complete.  
\end{proof}

For comparison, let us consider the time-sharing scheme between an MSR and an MBR code. Note that the MSR point corresponds to $(\mathsf{E}_r,\mathsf{E}_d) = (0,1)$ and MBR point corresponds to $(\mathsf{E}_r,\mathsf{E}_d)=(2,2)$. For a code using a time-sharing weight $\theta^{(n)}$, the rate can be bounded as
\begin{align}
\bar{M}^{(n)} &=  \theta^{(n)}(n-\tau_1)(\tau_1-\tau_2+1) + (1- \theta^{(n)})\frac{(n-\tau_1)(n+\tau_1-2\tau_2+1)}{2}\nonumber\\
&\leq (\tau_1-\tau_2+1)n+\frac{1- \theta^{(n)}}{2}(n^2-n-\tau_1(\tau_1-2\tau_2+1)),\label{eqn:m1}
\end{align}
and the redundancy can be bounded as
\begin{align}
n\bar{\alpha}^{(n)}-\bar{M}^{(n)} &= \theta^{(n)}\tau_1(\tau_1-\tau_2+1) + (1-\theta^{(n)})(n(n-\tau_2) - \frac{(n-\tau_1)(n+\tau_1-2\tau_2+1)}{2})\nonumber\\
&\geq \frac{1- \theta^{(n)}}{2}(n^2-n-\tau_1(\tau_1-2\tau_2+1)).\label{eqn:m2}
\end{align}
For any sequence of such a time-sharing codes index by $n$, we have 
\begin{align}
\mathsf{E}^{(n)}_r\geq \frac{\log \left(\bar{M}^{(n)}- (\tau_1-\tau_2+1)n\right)}{\log n},
\end{align}
which implies that for any $\delta>0$ and sufficiently large $n$,
\begin{align}
\mathsf{E}^{(n)}_r\geq \mathsf{E}^{(n)}_d-\delta,
\end{align}
and it follows that $\mathsf{E}_r\geq \mathsf{E}_d$ when $\mathsf{E}_d\geq 1$. Using (\ref{eqn:m1}) and (\ref{eqn:m2}), we can also write
\begin{align}
\mathsf{E}^{(n)}_d\leq \frac{\log(n\bar{\alpha}^{(n)}-\bar{M}^{(n)}+(\tau_1-\tau_2+1)n)}{\log n},
\end{align}
which implies that when $\mathsf{E}_r\leq 1$, we must have $\mathsf{E}_d\leq 1$ using the time-sharing strategy.  See Fig. \ref{fig:asymptotics} for an illustration of this region. 

The gap between $\mathcal{E}$ and the time-sharing scheme shows that the improvement of our proposed codes over the time-sharing scheme increases with $n$, and it can be unbounded.

\section{Conclusion}
\label{sec:conclusion}

A new construction for $(n,k,d)$ exact-repair regenerating codes is proposed by combining two layers of error correction codes together with combinatorial block designs. The resultant codes have the desirable \lq\lq{}uncoded repair\rq\rq{} property where the nodes participating in the repair simply send certain stored data without performing any computation. We show that the proposed code is able to achieve performance better than the time-sharing between an MSR code and an MBR code for some parameters. For the case of $d=n-1$ and $k=n-2$, an explicit construction is given in a finite field $\mathbb{F}(q)$ where $q$ is greater or equal to the block size in the combinatorial block designs. For more general  $(d,k)$ parameters, we show that there exist systematic linear codes in a sufficiently large finite field.

\appendix

In this appendix, we prove $\det(Q\rq{}_A\cdot G)$, as a function of the entries of the matrix $S$, is not identically zero. For this purpose, we shall revisit the matrix $Q_A$. Set the first $T-T(A)$ non-zero rows in $Q_A$ to be all zeros, and denote the resulting matrix as $Q^*_A$; let us omit the subscript $A$ from here on for simplicity. If there exists an assignment of $S$ such that $Q^*\cdot G$ has rank $M$, then clearly $\det(Q\rq{}\cdot G)\neq 0$, since $Q\rq{}$ is simply $Q^*$  without the all-zero rows.

Recall the matrix $Q^*$ is of size $(rN^*)\times (M+T)$ with $M$ non-zero rows, and the matrix $G$ is of size $(M+T)\times M$, where $M+T=(r-1)N^*$. 
Let the quotient and the remainder of $M$ divided by $(r-1)$ be $a$ and $b$, respectively, {\em i.e.}, $M=a(r-1)+b$. Let us partition the matrix $Q^*$ into four sub-matrices as
\begin{align}
Q^*=\left[\begin{array}{cc}Q_{11}&Q_{12}\\
Q_{21}&Q_{22}\end{array}\right]=\left[\begin{array}{cc}Q_{11}&0\\
Q_{21}&Q_{22}\end{array}\right]
\end{align}
where $Q_{11}$ is of size $(ar+b)\times(a(r-1)+b)$, which implies $Q_{12}$ is an all-zero matrix. It follows that
\begin{align}
Q^*\cdot G=\left[\begin{array}{cc}Q_{11}&0\\
Q_{21}&Q_{22}\end{array}\right]\cdot \left[\begin{array}{c}I\\S\end{array}\right]=\left[\begin{array}{c}Q_{11}\\Q_{21}+Q_{22}\cdot S\end{array}\right].
\end{align}

In order to show  $Q^*\cdot G$ has rank $M$, our plan is to specify an auxiliary matrix $H$, which satisfies the following two conditions.
\begin{itemize}
\item Condition one: the matrix $[Q^t_{11},H^t]$ has rank $M$;
\item Condition two: the equation $H=Q_{21}+Q_{22}\cdot S$ has a valid solution for $S$.
\end{itemize}
Clearly, if both these two conditions hold, the proof is essentially complete. 

We start by first assuming that there exists at least one all-zero row in the bottom $r-b$ rows of $R_{a+1}$; the other case will be addressed shortly. 
A set of $(a+1)$ intermediate matrics $H_1,H_2,\ldots, H_{a+1}$ shall be constructed as follows. For $j=1,2,\ldots,a$, find the all-zero rows in $R_j$, and denote the indices as $l_1,l_2,\ldots,l_{e_j}$; if $e_j\geq 2$, then the matrix $H_j$ is of size $(e_j-1)\times M$, where the $i$-th row has all zeros except the $(j-1)(r-1)+l_i$ position, which is assigned $1$. For $j=a+1$, find the all-zero rows in the first $b$ rows of $R_{a+1}$, denote the indices as $l_1,l_2,\ldots,l_{e_{a+1}}$; the matrix $H_{a+1}$ is of size $e_{a+1}\times M$, where the $i$-th row has all zeros except the $a(r-1)+l_i$ position, which is assigned $1$. The matrix $H$, which has the same size as   $Q_{21}+Q_{22}\cdot S$, if formed by first assigning all zeros to the rows that are all zeros in $[Q_{21},Q_{22}]$, then assign the rows of $H_1,H_2,\ldots, H_{a+1}$ into the remaining rows of $H$ in any order.

We have inherently assumed above that the total number of rows in $H_1,H_2,\ldots, H_{a+1}$ is the same as the number of rows in $[Q_{21},Q_{22}]$ that have non-zero entries. This is indeed true because the former together with the number of rows in $[Q_{11},Q_{12}]$ that are have non-zero entries totals to $M$, while the latter also satisfies this relation. 

To see that condition one holds, notice that in matrix $H$, by exchanging the rows $(j-1)r+l_1,(j-1)r+l_2,\ldots,(j-1)r+l_{e_j-1}$ in $Q_{11}$ ({\em i.e.}, the rows corresponding to the first $e_j-1$ all-zero rows in $R_j$) and the rows of $H_i$, each block matrix $R_j$ can have rank $r-1$; by similar operation, the top $b$ rows of matrix $R_{a+1}$ is an identity matrix. Due to the block diagonal structure of the matrix $Q_{11}$, this indeed implies that the matrix $[Q^t_{11},H^t]$ has rank $a(r-1)+b=M$.

To see that condition two also holds, we solve for $S$ block by block. First consider the block $R_{N^*}$ in $Q_{22}$, and assume $N^*>a+1$. Due to the block structure of $Q_{22}$, the determination of the last $(r-1)$ rows of $S$ only depends on $R_{N^*}$ and the last $r$ rows of $H$, but not any other entries in $Q^*$ and $H$. Let us denote the sub-matrix consisting of the last $r-1$ rows of $S$ as $S\rq{}$, and denote the sub-matrix consisting of the last $r$ rows of $H$ as $H\rq{}$. The problem essentially reduces to finding a solution for $R_{N^*}\cdot S\rq{}=H\rq{}$. Denote the indices of the rows which have non-zero entries in $R_{N^*}$ as $i_1,i_2,\ldots,i_e$, then the column span of $R_{N^*}$ is the space spanned by columns with a single $1$ at the $i_1,i_2,\ldots,i_e$ positions; this further relies on the structure of $R_{N^*}$ and the fact that there is at least one all-zero row in it. It follows that the column span of $H\rq{}$ is in the column span of $R_{N^*}$,  and thus there indeed exists a solution for $S\rq{}$. 
Repeat this process for the other blocks, as well as the partial block of $R_{a+1}$ in $Q_{22}$, a solution for $S$ is found. 



The case that there exists no all-zero row in the bottom $r-b$ rows of $R_{a+1}$ introduce the complication that for the partial block of $R_{a+1}$ in $Q_{22}$, because in this case its column span is one-dimension less than the space spanned by columns with a single $1$ at the desired positions. However, the only change required is the following: the $(r-b)$-th row of $H$ is chosen to be the summation of its first $(r-b-1)$ rows and the $(r-b)$-th row of $Q_{21}$. It is straightforward to check that both condition one and condition two can still be made to hold for this case. The proof is complete. \QED

\bibliographystyle{IEEEbib}

\end{document}